\documentclass[aps,pra,twocolumn,secnumarabic,longbibliography]{revtex4-2}

\usepackage{amsmath, amsfonts, amsthm, amssymb, bbm}
\usepackage[utf8]{inputenc}
\usepackage{bm}
\usepackage[dvipsnames]{xcolor}
\usepackage[colorlinks,citecolor=OliveGreen,urlcolor=blue,linkcolor=blue,bookmarks=true]{hyperref}      
\usepackage[nameinlink,capitalize]{cleveref}

\usepackage{dsfont}

\usepackage{physics}

\usepackage[labelformat=simple]{subcaption}

\newcommand{\bb}{\mathbb}
\newcommand{\mc}{\mathcal}
\newcommand{\row}{\operatorname{row}}

\newcommand{\res}[2]{\left. #1 \right|_{#2}}
\newcommand{\rk}{\operatorname{rank}}
\def\nn{\nonumber}

\usepackage{nicefrac}
\newcommand{\fig}[1]{Fig.~\ref{fig:#1}}
\newcommand{\secref}[1]{Sec.~\ref{sec:#1}}

\def\vcentcolon{\mathrel{\mathop\ordinarycolon}}
\begingroup \catcode`\:=\active
\lowercase{\endgroup
	\let :\vcentcolon
}

\newtheorem{theorem}{Theorem}[section]  
\newtheorem*{theorem*}{Theorem}         

\newtheorem{proposition}[theorem]{Proposition}
\newtheorem{lemma}[theorem]{Lemma}

\def\iden{\mathds{1}}

\def\iden{\mathds{1}}

\usepackage{quantikz}
\usepackage{caption}
\captionsetup{justification=raggedright, singlelinecheck=false}
\usepackage{algorithm}
\usepackage{algpseudocode}

\raggedbottom

\begin{document}
	
\title{Power and Limitations of Linear Programming Decoder for Quantum LDPC Codes}

\author{Shouzhen Gu}
\affiliation{Applied physics, Yale University, New Haven, Connecticut 06520, USA}
\affiliation{Yale Quantum Institute, Yale University, New Haven, Connecticut 06520, USA}

\author{Mehdi Soleimanifar}
\affiliation{California Institute of Technology, Pasadena, California 91125, USA}

\begin{abstract}
    Decoding quantum error-correcting codes is a key challenge in enabling fault-tolerant quantum computation. In the classical setting, linear programming (LP) decoders offer provable performance guarantees and can leverage fast practical optimization algorithms. Although LP decoders have been proposed for quantum codes, their performance and limitations remain relatively underexplored.
    In this work, we uncover a key limitation of LP decoding for quantum low-density parity-check (LDPC) codes: certain constant-weight error patterns lead to ambiguous fractional solutions that cannot be resolved through independent rounding. To address this issue, we incorporate a post-processing technique known as ordered statistics decoding (OSD), which significantly enhances LP decoding performance in practice. Our results show that LP decoding, when augmented with OSD, can outperform belief propagation with the same post-processing for intermediate code sizes of up to hundreds of qubits. These findings suggest that LP-based decoders, equipped with effective post-processing, offer a promising approach for decoding near-term quantum LDPC codes.

\end{abstract}

\maketitle

\section{Introduction}
Quantum error correction (QEC) is widely believed to be necessary to protect quantum information from the inevitable noise and imperfections that arise in quantum systems~\cite{Shor95, Preskill98, NielsenChuang10}. 
Effective QEC codes should have a large distance and high rate so that they can tolerate a significant number of errors while maintaining low overhead.
Another desirable feature is that the stabilizer measurements used to obtain information about errors have low weight, allowing for efficient and fault-tolerant syndrome extraction.
Quantum low-density parity-check (LDPC) codes~\cite{Breuckmann2019LDPC} are families of codes that satisfy this sparsity property, and they hold the potential for enabling large-scale quantum computation with low overhead~\cite{Gottesman2014, bravyi2024high, xu2024constant, NguyenPattison24overhead, YamasakiKoashi24}.
Recent constructions of quantum LDPC codes have led to improved parameters both in the asymptotic regime~\cite{Tillich2014LDPC, HastingsHaahODonnell2021, PKalmostlinear, Breuckmann2021LDPC, Panteleev2022Goodcode, leverrier2022quantum, Dinur2023goodCode} and for finite code sizes~\cite{bravyi2024high, xu2024constant}.

To make LDPC codes practically useful, they must be paired with efficient \emph{decoding algorithms} capable of determining errors that have affected the quantum state using information from the stabilizer measurements.
Computationally efficient and provably correct decoders have been introduced for certain families of LDPC codes, including quantum expander codes~\cite{leverrier2015quantum, FawziGrospellierLeverrier18} and asymptotically good codes~\cite{Gu2023decoder, Leverrier2024Decoder, Dinur2023goodCode, gu2024single}.
However, these theoretical algorithms are not always practical to implement.
While they achieve linear runtime in the blocklength, the constant prefactors, which may be exponential in the code's check weight, could be significant.
Alternatively, heuristic algorithms have been developed that are efficient and accurate in practice~\cite{Panteleev2021degeneratequantum,Roffe2020BPOSD,delfosse2021unionfinddecoderquantumldpc}.

One promising direction for developing decoding algorithms for quantum codes is to adapt successful decoders for classical codes to the quantum setting.
For example, the flip algorithm for expander codes~\cite{SipserSpielman96} was modified to small-set-flip~\cite{leverrier2015quantum} and its variants~\cite{Gu2023decoder,Leverrier2024Decoder,Dinur2023goodCode}, and belief propagation (BP) for classical LDPC codes~\cite{Pearl82BP,KFL01BP} was enhanced with ordered statistics decoding (OSD) post-processing (BP+OSD)~\cite{Panteleev2021degeneratequantum,Roffe2020BPOSD}.

Linear programming (LP) is another widely used decoder for classical LDPC codes~\cite{FeldmanLP,FeldmanThesis}. LP decoding is based on formulating the problem of finding the minimum-weight error consistent with a given syndrome as an integer program (IP) and relaxing it to a continuous optimization problem.
Thus, highly optimized LP solvers~\cite{gurobi} can be used to enable fast LP decoding.
LP decoders have provable performance guarantees for some classical LDPC codes~\cite{FeldmanConstantFraction,FeldmanCapacity}, such as expander codes, and provide certificates confirming minimum-weight solutions.
Although LP decoders remain relatively unexplored in the quantum setting, prior works have considered their direct extension, demonstrating promising performance for certain codes~\cite{FawziLP2021, LiVontobel, groues2022decoding}.

However, a significant challenge in applying LP decoders to quantum codes is the occurrence of \emph{fractional solutions} that do not correspond to valid physical corrections.
Such behavior was shown to be problematic for the surface code, codes with poor soundness~\cite{FawziLP2021}, and hypergraph product (HGP) codes with carefully tuned parameters~\cite{groues2022decoding}.
In this work, we ask:
\begin{center}
	\emph{Do fractional solutions arise \\
		generically in quantum LDPC codes?}
\end{center}

We answer this in the affirmative. In Sec.~\ref{sec:uncorrectableLPexamples}, we show that quantum codes generally allow for small, often constant-sized, error patterns which are uncorrectable by LP.
Such patterns, related to cycles in the code's Tanner graph, can prevent LP from attaining a threshold.

One approach for handling a fractional solution produced by the LP decoder is to round each entry of the solution independently~\cite{FawziLP2021, groues2022decoding}. While this method guarantees an integral solution, it is still not able to correct the problematic errors we identify.
Furthermore, the correction after independent rounding may not even have the same syndrome as the error, which our simulations show is the case in the vast majority of LP failures (Fig.~\ref{fig:wrongsyndromefraction}).
Motivated by this observation, we ask:
\begin{center} \emph{Are there more sophisticated rounding techniques that can improve the performance of LP decoders?} \end{center}

To explore this question, we draw inspiration from BP. BP is a fast message-passing decoder~\cite{KFL01BP} that is also known to perform well for classical codes~\cite{MacKayNeal95,mackay1996near,MacKay99} but has convergence issues for quantum codes due to short cycles in the Tanner graph~\cite{PoulinChung2008,BBANH15fifteenyears,Raveendran2021trappingsetsof}.
A common approach to address the non-convergence of BP is to apply OSD as a post-processing step.  In BP+OSD, the less reliable bits identified by BP are erased and Gaussian elimination is then used to recover a valid correction consistent with the given syndrome~\cite{Panteleev2021degeneratequantum, Roffe2020BPOSD}.
By interpreting the fractional outputs of LP as the reliability of the solution, we propose using OSD as a post-processing step in LP decoding (LP+OSD) to round the output to an integral solution satisfying the syndrome.
We show that the problematic error patterns identified in our analysis in Sec.~\ref{sec:uncorrectableLPexamples} can be corrected by OSD post-processing.

Another motivation for combining LP with OSD is that BP+OSD is already bottlenecked by the $O(n^3)$ runtime of the matrix inversion step in OSD. 
Therefore, replacing BP with LP, which also has complexity $O(n^3)$, does not increase the overall asymptotic complexity.
Since standalone LP decoding performs as well as—or better than—BP for many classical~\cite{FeldmanLP} and quantum codes~\cite{FawziLP2021, groues2022decoding}, one might expect that LP+OSD achieves lower logical error rates than BP+OSD. 
In this work, we find that this is the case for some mid-sized LDPC codes of up to 400 physical qubits (Figs.~\ref{fig:mainresults} and~\ref{fig:decodingcomparison}).

The remainder of the paper is structured as follows. After introducing some background on quantum LDPC codes and the LP and OSD decoders in Sec.~\ref{sec:prelim}, we describe the uncorrectable error patterns for LP in Sec.~\ref{sec:uncorrectableLPexamples}. In Sec.~\ref{sec:LPOSD}, we introduce our proposed LP+OSD decoder and explain why OSD post-pocessing helps in handling these problematic error patterns. We present the results of our numerical simulations in Sec.~\ref{sec:numerics}. Finally, we conclude in Sec.~\ref{sec:discussion} and discuss possible future directions.

\section{Preliminaries}
\label{sec:prelim}
We first review some facts about quantum LDPC codes and give an overview of the LP and OSD decoders.

\subsection{Classical and quantum codes}
A classical linear code is a subspace $C \subseteq \bb F_2^A$, where $A$ is a set indexing the bits. The classical code is often defined using a parity check matrix $H\in \bb F_2^{B\times A}$ with the code defined as the kernel: $C = \ker H$. Here, the rows $B$ index the set of parity checks that must be satisfied for a vector $x\in \bb F_2^A$ to be in the codespace. To visualize the code, we can represent it using a Tanner graph $T(C)$, which is a bipartite graph with vertices $A\sqcup B$. The edges are defined to connect $a\in A$ and $b\in B$ if the matrix element $H_{ba} = 1$.

A quantum stabilizer code $\mc C$ is a subspace of the $n$-qubit Hilbert space $\left(\bb C^2\right)^{\otimes n}$, defined  as the simultaneous $+1$-eigenspace of an abelian subgroup $S$ of the $n$-qubit Pauli group not containing $-I$, called the stabilizer group. When $S$ can be generated by operators from two sets $S_X$ and $S_Z$ consisting of Pauli operators of $X$ and $Z$ type, respectively, we call $\mc C$ a Calderbank-Shor-Steane (CSS) code~\cite{cssCalderbankShor,cssSteane}. We often identify $X$- or $Z$-type Pauli operators, their supports as subsets, and the corresponding vectors in $\bb F_2^n$ with each other. Associated with any CSS code are two parity check matrices $H_X\in \bb F_2^{m_X\times n}$ and $H_Z\in \bb F_2^{m_Z\times n}$, where the rows represent (supports of) the stabilizer generators of each type. When $H_X$ and $H_Z$ have at most a constant number of ones in each row and column, we say that the code is LDPC.

The number of logical qubits encoded in a CSS code is the dimension $k = n - \rk H_X - \rk H_Z$, and the rate is $k/n$. The distance is the weight of the smallest nontrivial logical operator, which is given by $d=\min\{d_X, d_Z\}$ with $d_X = \min \{|x|: x\in \ker H_Z\setminus \row{H_X}\}$ and $d_Z = {\min \{|x|: x\in \ker H_X\setminus \row{H_Z}\}}$. Here, $\row A$ is the rowspace of a matrix $A$ and $|x|$ is the Hamming weight of a binary vector.

We can also associate a Tanner graph $G=T(\mc C)=(V, E)$ with a CSS code $\mc C$. The vertices of $G$ split into three sets $V = C_Z \sqcup Q \sqcup C_X$, where $Q$ consists of the $n$ qubits and $C_Z$ and $C_X$ consist of the $m_Z$ and $m_X$ stabilizer generators of the corresponding type. The edges of $G$ are $\{i, c\}$ where $i\in Q$, $c\in C_Z\sqcup C_X$, and the qubit $i$ is in the support of the check $c$. Thus, $G$ has constant degree when the code is LDPC. We will sometimes work with the subgraphs $G_Z$ and $G_X$, which are the bipartite graphs generated by $C_Z \sqcup Q$ and $Q \sqcup C_X$, respectively.

We will use the convention that qubits in $Q$ are labeled $i$, checks in $C_X$ are labeled $j$, and checks in $C_Z$ are labeled $k$. The supports of $X$ and $Z$ stabilizers will be denoted $f$ and $g$, respectively, with generators labeled $f_j$ and $g_k$. We use the notation $N_X(\cdot)$ to denote the neighborhood in $G_X$. For example, $N_X(i)$ is a subset of $X$ checks.

\subsection{Code constructions}
\label{subsec:codeconstructions}
In this work, we will consider several families of quantum LDPC codes, including the surface code~\cite{Kitaev03}, HGP codes~\cite{Tillich2014LDPC}, lifted/balanced product codes~\cite{PKalmostlinear,Breuckmann2021LDPC}, and bivariate bicycle (BB) codes~\cite{bravyi2024high}.

An HGP code~\cite{Tillich2014LDPC} $\mc C$ is defined by two classical codes $C$, $C'$ with parity-check matrices $H\in \bb F_2^{B\times A}$, $H'\in \bb F_2^{B'\times A'}$, respectively. The qubits of $\mc C$ are $A\times A'\sqcup B\times B'$, the $X$ checks are associated with $A\times B'$, and the $Z$ checks are associated with $B\times A'$. The parity check matrices of $\mc C$ are $H_X = [\begin{array}{c|c} \iden_{A}\otimes H' & H^{\top}\otimes \iden_{B'}\end{array}]$ and $H_Z = [\begin{array}{c|c}H\otimes \iden_{A'} & \iden_{B}\otimes H'^{\top}\end{array}]$. Alternatively, whenever $\{a, b\}$ and $\{a', b'\}$ are edges in the Tanner graphs of the respective classical codes, then 
$\{(a, a'), (a, b')\}$, $\{(b, b'), (a, b')\}$ are edges in $G_X$ and $\{(b, a'), (a, a')\}$, $\{(b, a'), (b, b')\}$ are edges in $G_Z$.

The surface code~\cite{Kitaev03} can be thought of as the hypergraph product of two classical repetition codes. We will also consider the more qubit-efficient, rotated version of the surface code.

A lifted/balanced product code~\cite{PKalmostlinear,Breuckmann2021LDPC} $\overline{\mc C}$ is a quotient of an HGP code $\mc C$. More precisely, let $\mc C$ be defined by classical codes $C$ and $C'$. Suppose $C$ has a right group action for a group $G$; that is, $A$ and $B$ both have right $G$-actions and $\{ag, bg\}$ is an edge in in $T(C)$ whenever $\{a, b\}$ is an edge in $T(C)$. Similarly, suppose $C'$ has a left $G$-action. Then $\overline{\mc C}$ is defined by identifying the qubit or check $(xg, y)$ with $(x, gy)$ for all $g\in G$, where $x\in A\sqcup B$ and $y\in A'\sqcup B'$. Thus, the number of qubits, $X$ checks, and $Z$ checks are each reduced by a factor of $|G|$ if the action is free.

A class of lifted product codes are BB codes~\cite{bravyi2024high}, which are defined by a quotient over the abelian group $\bb Z_\ell\times \bb Z_m$. They can be alternatively be described by parity-check matrices $H_X = [\begin{array}{c|c}h & h'\end{array}]$, $H_Z = [\begin{array}{c|c}h'^\top & h^\top\end{array}]$, where $h, h'\in \bb F_2[x, y]/\langle x^\ell - 1, x^m - 1\rangle$ are bivariate polynomials which are typically the sums of three monomials. Concretely, $x = S_\ell\otimes I_m$ and $y = I_\ell \otimes S_m$, where $S_r\in \bb F_2^{r\times r}$ denotes a cyclic shift matrix.

\subsection{LP decoder}
\label{subsec:LPdecoder}
In the decoding problem, some unknown (Pauli) error $E$ is applied to the code state. By measuring stabilizers, we obtain the syndrome, which is a classical bit string $s$ telling us which generators anticommute with $E$. A decoder is a classical algorithm that takes $s$ as input and outputs a correction $\tilde E$. The algorithm succeeds if applying $\tilde E$ to the corrupted state returns the original code state, i.e., if $\tilde E E\in S$.

For CSS codes, we can consider decoding $X$ and $Z$ errors separately---although this may not be optimal as it ignores correlations due to $Y$ errors.
For $Z$ errors, we identify $E$ with its support as a vector $e\in \bb F_2^n$, and the relevant syndrome will be obtained from the $X$ stabilizer generators as $s=H_X e\in \bb F_2^{m_X}$.
Using $s$, the decoder should output $\tilde e$ such that $e+\tilde e\in \row{H_Z}$.
As the $X$ errors are decoded similarly, we will only consider $Z$ errors.

For independent and identically distributed $Z$ errors, the most likely error giving a syndrome $s$ is the one of lowest Hamming weight. This can be expressed as the solution of the (nonlinear) IP
\begin{align}\label{eq:IntegerProgramming}
	\begin{split}
		\min_{x\in\{0,1\}^{n}}\quad  &\sum_{i\in  Q} x_i\\
		\text{s.t.}\quad  &H_X x=s\mod 2\, .
	\end{split}
\end{align}
As IPs are generally intractable~\cite{Karp72}, a \emph{linear programming (LP) relaxation} can instead be considered. An LP is an optimization problem with real-valued variables such that the objective function is linear and all constraints are linear inequalities. Such problems can be solved efficiently~\cite{Schrijver11}. 
Ref.~\cite{FawziLP2021} considered the following LP to solve the decoding problem:
\begin{alignat}{2}
	\min \quad &\sum_{i\in Q} x_i \label{eq:LPRelaxationPrimal1}\\
	\text{s.t.} \quad & \sum_{S\in E_j^{s_j}}w_{j,S}=1 \quad && \forall j\in C_X\, ,\nn\\
	& \sum_{\substack{S\in E_j^{s_j} \\ S\ni i}}w_{j,S}=x_i \quad && \forall i\in Q,\ j\in C_X,\ \{i, j\}\in E\, ,\nn\\
	& x_i \ge 0 \quad &&\forall i\in  Q\, ,\nn\\
	& w_{j, S} \ge 0 \quad && \forall j\in  C_X,\ S\in E_j^{s_j}\, .\nn
\end{alignat}
Here, for any $j\in C_X$, we define
\begin{equation}
	E_j^{s_j} = \{S\subseteq f_j: |S| = s_j \mod 2\}
\end{equation} as the subsets with odd (when $s_j=1$) or even (when $s_j=0$) cardinality (including the empty set $\emptyset\in E_j^{0}$). In addition to $x_i$, the optimization problem includes auxiliary variables $w_{j,S}$ for $j\in C_X$ and $S\in E_j^{s_j}$. When $w_{j,S}\in\{0,1\}$, it corresponds to a binary decision variable indicating the subset of $f_j$ that is believed to be corrupted, i.e.,
\begin{equation}\label{eq:wjsdefinition}
	w_{j,S} = \begin{cases}
		1 & \res{x}{f_j} = S \, , \\
		0 & \text{otherwise}\, .
	\end{cases}
\end{equation}

We can now see why the optimization~\eqref{eq:LPRelaxationPrimal1} is a linear relaxation of the original IP~\eqref{eq:IntegerProgramming}. Suppose the last two constraints of the LP~\eqref{eq:LPRelaxationPrimal1} are replaced by their binary versions $x_i,w_{j,S}\in\{0,1\}$. Then the second constraint relates the auxiliary variables to the original variables $x_i$ according to Eq.~\eqref{eq:wjsdefinition}, and the first constraint enforces the parity of the error at each check. Hence, the $x_i$ variables of the optimal integral solution are the same as in the IP.

Allowing $x_i, w_{j,S}$ to take real values in $[0,1]$ yields an LP with number of variables and constraints scaling linearly with $n$. If the optimal solution $(x^*, w^*)$ obtained from the LP is integral, then $x^*$ is also the optimal solution of the IP. Such a solution would provide a certificate that $\tilde e = x^*$ is most likely error causing the syndrome. However, because the set of feasible solutions is larger for the LP, a fractional solution may yield a smaller objective value than for the IP. In this case, we may either declare a failure or output $\tilde e$ obtained by independently rounding each bit of $x^*$~\cite{FawziLP2021}.

\subsection{Ordered statistics decoding}
\label{subsec:OSD}
Ordered statistics decoding (OSD), first introduced in the context of classical codes~\cite{Fossorier1995OSD, Fossorier2001OSD}, is a post-processing technique that leverages soft information from decoders like BP (or for us, LP) to enhance decoding accuracy.
In quantum error correction, OSD post-processing on a potentially non-covergent BP output ensures a recovery operator that always maps the corrupted state back to the codespace~\cite{Panteleev2021degeneratequantum, Roffe2020BPOSD}. This approach has been shown to improve the performance of standard BP on quantum LDPC codes by several orders of magnitude~\cite{Panteleev2021degeneratequantum}.

Let $x\in [0,1]^n$ be a vector representing probabilities of $Z$ errors, which may be obtained from another decoder. The idea behind OSD is to commit the bits that are likely not erroneous, erase the rest, and use Gaussian elimination to find a solution consistent with the syndrome, which hopefully has low weight.

More precisely, we sort the qubits by values of $x_i$ in decreasing order and let $\mc S$ be the first $m_X = \rk(H_X)$ qubits that correspond to linearly independent columns of $H_X$. (By removing rows from $H_X$, we may assume without loss of generality that $H_X$ is full rank.) Let $\mc T$ be the rest of the qubits. In zeroth-order OSD (OSD-0), the correction $\tilde e$ is defined by $\tilde e_{\mc T} = 0$ and $\tilde e_{\mc S} = H_{\mc S}^{-1}s$. In higher-order OSD methods, we try solutions where $\tilde e_{\mc T}$ has a few bits set to one and output the solution where $(\tilde e_{\mc S}, \tilde e_{\mc T})$ has the lowest overall Hamming weight.
For example, in the combination sweep strategy (OSD-CS)~\cite{Roffe2020BPOSD}, all weight-one configurations of $\tilde e_{\mc T}$ as well as weight-two configurations limited to the first $\lambda$ bits are considered, where $\lambda$ is a parameter to be chosen.
The remaining bits $\tilde e_{\mc S}$ are chosen so that the syndrome is satisfied, i.e., $\tilde e_{\mc S} = H_{\mc S}^{-1}s + H_{\mc S}^{-1}H_{\mc T}\tilde e_{\mc T}$. Higher-order OSD generally leads to improved accuracy compared to OSD-0 at the cost of increased runtime. In OSD-CS, the parameter $\lambda$ is often set to a fixed value, such as $60$, to balance the two considerations.

\section{Constant-weight uncorrectable errors using LP decoding}
\label{sec:uncorrectableLPexamples}
We describe error patterns of constant weight in various quantum LDPC codes that result in fractional solutions to the LP relaxation~\eqref{eq:LPRelaxationPrimal1}.
Such error patterns cannot be corrected by the LP decoder without rounding, and we show that independent rounding of the fractional points often does not lead to a valid correction either.
This means that the LP decoder would not give a threshold for these codes, as there would be a constant probability of an uncorrectable error regardless of code size. To explain these error patterns, we use an equivalent ``error-based'' formulation of the LP~\cite{LiVontobel} and consider its dual optimization problem. The error-based LP is the one typically studied in the classical setting~\cite{FeldmanLP}, as we have access to the individual corrupted bits. We consider this formulation to adapt techniques that have been developed for analyzing classical codes.

\subsection{Error-based LP}
To use the error-based method of formulating the LP, we first find any error $e'$ satisfying the syndrome $s$, i.e., $H_X e' = s$. Then we solve the following LP:
\begin{alignat}{2}
\min \quad &\sum_{i \notin e'}x_i - \sum_{i\in e'} x_i 
\label{eq:errorbasedLP}\\
\text{s.t.} \quad &\sum_{S\in E_j^0}w_{j,S}=1 \quad &&\forall j\in C_X\, ,\nn\\
& \sum_{\substack{S\in E_j^0 \\ S\ni i}}w_{j,S}=x_i \quad &&\forall i\in Q,\ j\in C_X,\ \{i, j\}\in E\, ,\nn\\
& x_i\geq 0 \quad &&\forall i\in  Q\, ,\nn\\
& w_{j,S}\geq 0 \quad &&\forall j\in  C_X,\ S\in E_j^0\, .\nn
\end{alignat}
In the objective function, we identified $e'$ as the subset of indices $i$ with $e'_i=1$. When restricting to integral variables $x_i$, $w_{j,S}$, the feasible solutions for $x$ are the ones satisfying all the parity checks of the code. Therefore, the optimal integral solution is the closest codeword to $e'$, and the lowest-weight error for the syndrome $s$ is ${e' - x^* \pmod 2}$. 
A decoder using the error-based LP would output the solution $\tilde e$ with $\tilde e_i = x^*_i$ when $e'_i = 0$ and $\tilde e_i = 1 - x^*_i$ when $e'_i = 1$. These formulas are also valid when $x^*$ is a fractional solution, but $\tilde e$ would need to be rounded.

The error-based LP relaxation is equivalent to the syndrome-based one~\cite{FawziLP2021,groues2022decoding}. This equivalence is formally stated in the following lemma, with a proof provided in Appendix~\ref{app:reformulationLP} for completeness.

\begin{lemma}
	\label{lem:lpequivalnet}
	Let $e\in \bb F_2^n$ be a $Z$ error with syndrome $s=H_Xe$ and $e'$ be any error with the same syndrome. Then the LP~\eqref{eq:errorbasedLP}, defined with $e'$, is equivalent to the LP~\eqref{eq:LPRelaxationPrimal1}. More precisely, there is a bijection between feasible solutions given by the reflections
	\begin{align}
		x_i &= \begin{cases}x_i', \quad &i\notin e'\\1-x_i', \quad &i\in e'\end{cases}\quad \forall i\in Q\, ,\\
		w_{j,S} &= w_{j,S + U_j}'\quad \forall j\in C_X,\ S\in E_j^{s_j}\, ,
	\end{align}
	where $x_i$,$w_{j,S}$ are variables of the LP~\eqref{eq:LPRelaxationPrimal1}, $x_i'$,$w_{j,S}'$ are variables of the LP~\eqref{eq:errorbasedLP}, and $U_j = \res{e'}{f_j}$. Two solutions related by the bijection have the same objective values up to an additive constant $|e'|$. Furthermore, the LP decoders based on these solutions give the same output.
\end{lemma}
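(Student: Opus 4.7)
The plan is to verify the three claims of the lemma in turn: (i) the stated map is a well-defined bijection between feasible sets, (ii) the objective values differ by the additive constant $|e'|$, and (iii) the induced decoder outputs agree. The core algebraic input is the observation that symmetric difference with $U_j$ is an involution on $2^{f_j}$ that swaps the parity classes $E_j^0$ and $E_j^{s_j}$. Indeed, $|U_j|=|e'\cap f_j|\bmod 2=(H_Xe')_j=s_j$, so $|S+U_j|\equiv|S|+s_j\pmod 2$, making $S\mapsto S+U_j$ a bijection $E_j^0\to E_j^{s_j}$ (and its own inverse). Combined with the coordinatewise reflection $x_i\leftrightarrow 1-x_i$ on $i\in e'$, this gives an involutive bijection on the variable spaces.

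Next I would check the four constraint families of the syndrome-based LP~\eqref{eq:LPRelaxationPrimal1} under the substitution. Nonnegativity of $w_{j,S}$ follows immediately since it is just a relabeling. The normalization $\sum_{S\in E_j^{s_j}}w_{j,S}=1$ becomes $\sum_{S'\in E_j^0}w'_{j,S'}=1$ by the change of variables $S'=S+U_j$. Nonnegativity of $x_i$ requires splitting on $i\in e'$: when $i\notin e'$ we have $x_i=x_i'\ge 0$; when $i\in e'$ we need $x_i=1-x_i'\ge 0$, which in the error-based LP is guaranteed by summing the incidence constraint $x_i'=\sum_{S\ni i}w'_{j,S}$ with the normalization to get $x_i'\le 1$.

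The incidence constraint is the step I expect to require the most care. Writing $S=S'+U_j$, the condition $S\ni i$ depends on whether $i\in U_j$, i.e., whether $i\in e'$. If $i\notin e'$, then $i\in S$ iff $i\in S'$, so $\sum_{S\in E_j^{s_j},\,S\ni i}w_{j,S}=\sum_{S'\in E_j^0,\,S'\ni i}w'_{j,S'}=x_i'=x_i$. If $i\in e'$, then $i\in S$ iff $i\notin S'$, so the sum becomes $\sum_{S'\in E_j^0}w'_{j,S'}-\sum_{S'\in E_j^0,\,S'\ni i}w'_{j,S'}=1-x_i'=x_i$, as required. This case analysis, together with the previous paragraph, establishes that the map sends feasible solutions to feasible solutions; its involutivity then gives bijectivity of feasible sets.

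Finally, the objective-value identity is a one-line computation: splitting the sum $\sum_i x_i$ into $i\notin e'$ and $i\in e'$ and applying the definitions gives $\sum_i x_i=\sum_{i\notin e'}x_i'+\sum_{i\in e'}(1-x_i')=\bigl(\sum_{i\notin e'}x_i'-\sum_{i\in e'}x_i'\bigr)+|e'|$, matching the objective of~\eqref{eq:errorbasedLP} up to $|e'|$. Hence optima correspond under the bijection. For the decoder-output claim, note that the syndrome-based decoder returns $\tilde e=x^*$ (after rounding, if needed), while the error-based decoder returns $\tilde e_i=x_i'^*$ for $i\notin e'$ and $\tilde e_i=1-x_i'^*$ for $i\in e'$; by the coordinate reflection these are literally equal, and since rounding is applied coordinatewise, the two outputs coincide after rounding as well.
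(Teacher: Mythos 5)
Your proposal is correct and follows essentially the same route as the paper's proof in Appendix~\ref{app:reformulationLP}: the key observation that $|U_j|\equiv s_j\pmod 2$ makes $S\mapsto S+U_j$ a parity-swapping bijection, the same case analysis on $i\in e'$ for the incidence constraint (using the normalization constraint to get $1-x_i'$), the same bound $x_i'\le 1$ for nonnegativity, and the identical objective-value and decoder-output computations. The only cosmetic difference is that you invoke involutivity of the reflection to get bijectivity, where the paper simply notes the reverse direction is verified by the same argument.
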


Associated with any LP is a dual LP. For the error-based LP~\eqref{eq:errorbasedLP}, the dual LP has variables $\sigma_j$ for $j\in C_X$ and $\tau_{ij}$ for $i\in Q$ and $j\in C_X$ such that $\{i,j\}\in E$. It is given as follows:
\begin{alignat}{2}
	\max \quad &\sum_{j\in C_X}\sigma_j \label{eq:dualLP}\\
	\text{s.t.} \quad &\sum_{j\in N_X(i)} \tau_{ij}\leq \gamma_i \quad &&\forall i\in Q\, ,\nn\\
	& \sum_{i\in S} \tau_{ij}\geq \sigma_j \quad &&\forall j\in C_X,\ S\in E^{0}_j\, ,\nn
\end{alignat}
where $\gamma_i = (-1)^{e'_i}$.

A standard result from optimization states that the primal and dual LPs have the same objective value~\cite{Schrijver11}.

\begin{lemma}
	\label{lem:duality}
	Let $x^*=(x^*_1,\dots,x^*_n)$ and $\sigma^*=(\sigma^*_1,\dots,\sigma^*_{m_X})$ be the optimal solutions of the LPs \eqref{eq:errorbasedLP} and \eqref{eq:dualLP}, respectively. Then
	\begin{align}
		\sum_{i \notin e'}x^*_i - \sum_{i\in e'} x^*_i = \sum_{j\in C_X}\sigma^*_j\, .
	\end{align}
\end{lemma}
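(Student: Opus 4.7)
The plan is to recognize that \lemref{duality} is nothing more than strong LP duality applied to the primal/dual pair \eqref{eq:errorbasedLP} and \eqref{eq:dualLP}. Concretely, I would proceed in two stages: first verify that \eqref{eq:dualLP} really is the Lagrangian dual of \eqref{eq:errorbasedLP}, and then check that the hypotheses ensuring strong duality hold, so the equality of the optima follows from the standard result (as in Ref.~\cite{Schrijver11}, which the paper already cites).

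For the first stage, I would write the primal in the standard form $\min c^\top y$ subject to $Ay=b$, $y\ge 0$, with $y$ collecting the variables $(x_i,w_{j,S})$ and $c$ assigning weight $\gamma_i=(-1)^{e'_i}$ to each $x_i$ and $0$ to each $w_{j,S}$. Introducing a free multiplier $\sigma_j$ for each normalization constraint $\sum_{S\in E_j^0}w_{j,S}=1$ and a free multiplier $\tau_{ij}$ for each consistency constraint $x_i-\sum_{S\in E_j^0,\,S\ni i}w_{j,S}=0$, I collect terms of the Lagrangian. The coefficient of $x_i$ becomes $\gamma_i-\sum_{j\in N_X(i)}\tau_{ij}$ and the coefficient of $w_{j,S}$ becomes $-\sigma_j+\sum_{i\in S}\tau_{ij}$; minimizing over $y\ge 0$ forces both to be nonnegative, which reproduces exactly the two constraints of \eqref{eq:dualLP}, while the constant term yields the dual objective $\sum_{j}\sigma_j$.

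For the second stage I would verify primal feasibility and boundedness. Feasibility is witnessed by the assignment $x_i=0$, $w_{j,\emptyset}=1$, and $w_{j,S}=0$ for $S\ne\emptyset$, which satisfies all equality constraints since $\emptyset\in E_j^0$. Boundedness follows because the consistency and normalization constraints together with $w_{j,S}\ge 0$ imply $0\le x_i\le 1$ for every qubit in the support of some $X$-check (which holds without loss of generality), so the primal objective is at least $-|e'|$. Strong duality then gives the claimed equality of the two optimal values.

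The calculation is entirely routine; the only mildly delicate point is ensuring the sign conventions match the way $\tau_{ij}$ is defined in \eqref{eq:dualLP} (this is what determines whether one writes the consistency constraint as $x_i-\sum w_{j,S}=0$ or its negation), and making explicit that primal feasibility plus primal boundedness suffices to exclude the pathological gap cases of LP duality. Neither step presents a real obstacle, so the proof will be short.
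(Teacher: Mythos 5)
Your proposal is correct and takes the same route as the paper, which simply invokes strong LP duality (citing Ref.~\cite{Schrijver11}) and gives no further proof of \lemref{duality}; your write-up additionally supplies the routine verifications the paper omits, namely that \eqref{eq:dualLP} is indeed the dual of \eqref{eq:errorbasedLP} and that feasibility (via $w_{j,\emptyset}=1$) and boundedness (via $0\le x_i\le 1$) rule out a duality gap. The only nit is the internal sign slip---you write the consistency constraint as $x_i-\sum_{S\ni i}w_{j,S}=0$ but the coefficients you then report correspond to its negation---which you already flag yourself as the one delicate convention to fix.
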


\subsection{Problematic error configurations}\label{subsec:problematicPatterns}
Let $e$ be a reduced error, that is, one with the lowest Hamming weight out of the equivalent errors obtained by adding stabilizers. Because the error-based LP~\eqref{eq:LPRelaxationPrimal1} is equivalent to the syndrome-based LP~\eqref{eq:errorbasedLP} regardless of the representative $e'$ chosen (Lemma~\ref{lem:lpequivalnet}), we may assume $e' = e$. The reduced error conditions means that the optimal integral solution of the LP~\eqref{eq:errorbasedLP} is 0 (for example, with the solution $x=0$).

For the different error patterns we construct, we argue by contradiction to show that they are not correctable by LP. If the LP decoder succeeds without rounding, the optimal solution must not be fractional. Therefore, the optimal value of the LP~\eqref{eq:errorbasedLP} is 0. By Lemma~\ref{lem:duality}, so is the optimal value of the dual LP~\eqref{eq:dualLP}.
When this is the case, each $\sigma_j$ must be 0 because the second constraint with $S=\emptyset$ implies that the variables $\sigma_j$ are non-positive. When $\sigma_j=0$, the second constraint can alternatively be written as $\tau_{ij}+\tau_{i'j}\ge 0$ for all $j\in C_X$ and $i,i'\in f_j$ with $i\ne i'$. Therefore, we obtain a simplified set of inequalities.

\begin{proposition}
	If the LP~\eqref{eq:LPRelaxationPrimal1} outputs an integral solution for a reduced error $e$, there exist edge weights $\tau_{ij}$ that satisfy the inequalities
	\begin{alignat}{2}
		\sum_{j\in N_X(i)}\tau_{ij}&\le \gamma_i \quad &&\forall i\in Q \label{eq:dualLPsimplified1}\, ,\\
		\tau_{ij}+\tau_{i'j}&\ge 0 \quad &&\forall j\in C_X,\ i, i'\in f_j,\ i\ne i' 
		\label{eq:dualLPsimplified2}\, .
	\end{alignat}
\end{proposition}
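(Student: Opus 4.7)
The plan is to leverage strong LP duality together with the hypothesis that $e$ is a reduced error to conclude that the dual variables $\sigma_j^\ast$ in the optimal dual solution must all vanish, and then read off the two inequalities directly from dual feasibility.

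First I would observe that, since $e' = e$ is reduced, the integer optimum of the primal error-based LP~\eqref{eq:errorbasedLP} is attained by $x = 0$ with objective value $0$: any other integral feasible $x$ corresponds to a codeword of $H_X$, and $e' + x \pmod 2$ is then an equivalent error, whose weight is minimized precisely when $x = 0$ by the reduced assumption. Next, under the hypothesis that the LP~\eqref{eq:errorbasedLP} outputs an integral solution, its optimal value must coincide with this IP value, i.e., $\sum_{i \notin e'} x_i^\ast - \sum_{i \in e'} x_i^\ast = 0$. Invoking \lemref{duality}, the optimum of the dual LP~\eqref{eq:dualLP} satisfies $\sum_{j \in C_X} \sigma_j^\ast = 0$.

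The key step is then to show $\sigma_j^\ast = 0$ for every $j$. For this I would specialize the second dual constraint to $S = \emptyset \in E_j^0$, which yields $0 \ge \sigma_j$ at every feasible point, hence $\sigma_j^\ast \le 0$. Combined with $\sum_j \sigma_j^\ast = 0$, this forces $\sigma_j^\ast = 0$ for all $j \in C_X$. Substituting back into the second dual constraint gives $\sum_{i \in S} \tau_{ij}^\ast \ge 0$ for every even-cardinality $S \subseteq f_j$; specializing to $S = \{i, i'\}$ produces the desired inequality~\eqref{eq:dualLPsimplified2}, while the first inequality~\eqref{eq:dualLPsimplified1} is just dual feasibility of the $\tau^\ast$'s. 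Setting $\tau_{ij} := \tau_{ij}^\ast$ completes the proof.

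There is no real obstacle: the argument is a direct application of LP duality plus the trivial observation that $\emptyset$ is an allowed choice of $S$. The only mild subtlety worth flagging in the write-up is that although the full dual constraint ranges over all even subsets $S \subseteq f_j$, retaining only the pairwise inequalities in the statement loses no information, since any even $S$ can be partitioned into pairs and the sum of pairwise non-negativity bounds recovers $\sum_{i \in S} \tau_{ij} \ge 0$.
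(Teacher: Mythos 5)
Your proposal is correct and follows essentially the same route as the paper: use the reduced-error hypothesis to pin the error-based LP optimum at $0$, invoke \lemref{duality} to transfer this to the dual, use the $S=\emptyset$ constraint to force each $\sigma_j^\ast\le 0$ and hence $\sigma_j^\ast=0$, and then read off the two inequalities from dual feasibility with $S=\{i,i'\}$. The closing remark about recovering the full even-subset constraints by pairing is a nice observation but not needed, since the proposition only asserts the pairwise inequalities.
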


One intuitive way to understand the above inequalities is in terms of hyperflow of ``poison''~\cite{DaskalakisLP}. Each corrupted qubit $i\in e$ has one unit of poison that must be transported away. In order to transport some amount of poison $\beta_j=-\tau_{ij}$ through a neighboring check $j\in C_X$, it has to spread to all other neighbors of $j$ ($\tau_{i'j}=\beta_j$ for $i'\in f_j\setminus\{i\}$). Each uncorrupted qubit $i\in Q\setminus e$ can absorb one unit of poison. Inequalities~\eqref{eq:dualLPsimplified1} and~\eqref{eq:dualLPsimplified2} have a solution if and only if there is a valid hyperflow. Using this intuition, we prove a fact about errors that are half of a $Z$ stabilizer.

\begin{lemma}
	\label{lem:halfZstab}
	Let $g$ be a $Z$ stabilizer of even weight and $e$ be a reduced error such that $|e\cap g|=|g|/2$. If the LP~\eqref{eq:errorbasedLP} outputs an integral solution, then any edge weights satisfying Inequalities~\eqref{eq:dualLPsimplified1} and~\eqref{eq:dualLPsimplified2} must saturate the first inequality for all $i\in g$.
\end{lemma}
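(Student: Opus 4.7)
The plan is to sum the first dual inequality~\eqref{eq:dualLPsimplified1} over $i\in g$ and sandwich the result between matching upper and lower bounds, which will force each individual inequality to be tight for every $i\in g$.

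For the upper bound, I would first evaluate the right-hand side of the summed inequality. Since we have chosen $e'=e$, the signs are $\gamma_i = (-1)^{e_i}$, so $\gamma_i = -1$ for $i\in e$ and $\gamma_i = +1$ for $i\notin e$. The hypothesis $|e\cap g| = |g|/2$ then gives
\[
\sum_{i\in g}\gamma_i = \tfrac{|g|}{2}\cdot(+1) + \tfrac{|g|}{2}\cdot(-1) = 0,
\]
so summing~\eqref{eq:dualLPsimplified1} over $i\in g$ yields $\sum_{i\in g}\sum_{j\in N_X(i)}\tau_{ij} \le 0$.

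For the matching lower bound, I would swap the order of summation. A pair $(i,j)$ with $i\in g$ and $j\in N_X(i)$ is the same data as $i\in g\cap f_j$, so
\[
\sum_{i\in g}\sum_{j\in N_X(i)}\tau_{ij} = \sum_{j\in C_X}\sum_{i\in g\cap f_j}\tau_{ij}.
\]
The key structural input is that $g$ is the support of a $Z$-stabilizer and $f_j$ is the support of an $X$-stabilizer, so CSS commutation forces $|g\cap f_j|$ to be even for every $j$. For each $j$ with $|g\cap f_j|>0$, I can then partition $g\cap f_j$ arbitrarily into pairs $\{i,i'\}\subseteq f_j$ and apply~\eqref{eq:dualLPsimplified2} to each pair to conclude that $\sum_{i\in g\cap f_j}\tau_{ij}\ge 0$. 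Summing over $j$ gives $\sum_{i\in g}\sum_{j\in N_X(i)}\tau_{ij} \ge 0$.

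Combining the two bounds, $\sum_{i\in g}\sum_{j\in N_X(i)}\tau_{ij} = 0 = \sum_{i\in g}\gamma_i$. Since the inequality $\sum_{j\in N_X(i)}\tau_{ij}\le \gamma_i$ holds term by term and the two summed expressions are equal, every term indexed by $i\in g$ must be saturated. The only delicate step—and really the main obstacle—is recognizing that the pairing argument requires $|g\cap f_j|$ to be even, which I would justify by invoking the defining commutation property of CSS codes between any $Z$-type and $X$-type stabilizer support. Once that parity fact is in hand, the rest is essentially a bookkeeping exercise reorganizing the double sum.
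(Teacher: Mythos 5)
Your proof is correct and is essentially the paper's argument made algebraically explicit: the paper phrases the same reasoning in terms of the ``poison'' hyperflow (evenness of $|f_j\cap g|$ from CSS commutation forces poison leaving a qubit of $g$ to land back in $g$, and the balance $|e\cap g|=|g\setminus e|=|g|/2$ forces saturation), while you formalize it by summing Inequality~\eqref{eq:dualLPsimplified1} over $i\in g$, swapping the order of summation, and pairing terms within each $g\cap f_j$ via Inequality~\eqref{eq:dualLPsimplified2}. The key structural fact you identify---that $|g\cap f_j|$ is even for every $X$ check $j$---is exactly the one the paper uses.
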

\begin{proof}
	Because any $j\in C_X$ commutes with $g$, the overlap $|f_j\cap g|$ must be even. In particular, it must be at least $2$ for any $j\in N_X(g)$. This means whenever we try to move poison away from a qubit in $g$, it must be distributed somewhere else in $g$ (and possibly to other qubits). Since we need to move $|e\cap g|=|g|/2$ units of poison away from the corrupted vertices in $g$ and there are only $|g\setminus e|=|g|/2$ uncorrupted vertices to absorb the poison in $g$, the first inequality~\eqref{eq:dualLPsimplified1} for all $i\in g$ must be saturated.  
\end{proof}

We are now ready to present several configurations of small uncorrectable errors which can occur in codes with stabilizers satisfying certain conditions. 

\begin{proposition}
	\label{counter:Zoverlap}
	Let $g,g'$ be two $Z$ stabilizers of even weight such that $|g\cap g'|\ge 2$.
	Furthermore, suppose all $X$ checks $j\in N_X(g\cap g')$ satisfy $f_j\cap (g+g')\ne \emptyset$.
    Then the LP decoder~\eqref{eq:LPRelaxationPrimal1} cannot correct a reduced error $e$ defined with $|\res{e}{g\setminus g'}| = \lfloor |g\setminus g'|/2 \rfloor$, $|\res{e}{g'\setminus g}| = \lceil |g'\setminus g|/2 \rceil$, and $|\res{e}{g\cap g'}| = 1$.
\end{proposition}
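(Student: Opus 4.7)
The plan is to argue by contradiction via the dual LP~\eqref{eq:dualLP}. Assume the LP~\eqref{eq:LPRelaxationPrimal1} outputs an integral solution for the reduced error $e$; then by Lemma~\ref{lem:duality} the dual optimum is zero and there exist weights $\tau_{ij}$ satisfying~\eqref{eq:dualLPsimplified1} and~\eqref{eq:dualLPsimplified2}. The strategy is to apply Lemma~\ref{lem:halfZstab} to the composite stabilizer $g+g'$, combine the resulting saturation with the structural hypothesis on $X$-checks near $g\cap g'$, and derive a contradiction at the unique corrupted qubit $i_0\in e\cap g\cap g'$.

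First I would verify the identity $|e\cap(g+g')|=|g+g'|/2$. Writing $a=|g\setminus g'|$ and $b=|g'\setminus g|$, the fact that $|g|$ and $|g'|$ are both even forces $a$ and $b$ to share a common parity, so $\lfloor a/2\rfloor+\lceil b/2\rceil=(a+b)/2$ in either the even or odd case. Since $g+g'$ is itself a $Z$ stabilizer of even weight, Lemma~\ref{lem:halfZstab} yields $\sum_{j\in N_X(i)}\tau_{ij}=\gamma_i$ for every $i\in g+g'$. Summing over $i\in g+g'$ and swapping the order of summation gives $\sum_{j}\sum_{i\in(g+g')\cap f_j}\tau_{ij}=\sum_{i\in g+g'}\gamma_i=0$. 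Each inner sum ranges over an even number of qubits, since $g+g'$ commutes with $f_j$, so by pairing terms and applying~\eqref{eq:dualLPsimplified2} each inner sum is non-negative and must therefore vanish identically.

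Next I would exploit this vanishing at each check. If $|f_j\cap(g+g')|=2m\geq 4$, a sorting argument on the $\tau$-values combined with the pair inequality of the smallest one against every other element forces all $\tau_{ij}$ with $i\in f_j\cap(g+g')$ to be zero; if $|f_j\cap(g+g')|=2$, the two $\tau$-values are exact negatives of each other. For any $j\in N_X(i_0)$ one has $j\in N_X(g\cap g')$, so by the proposition's hypothesis $f_j\cap(g+g')$ contains some $i'$, necessarily distinct from $i_0$ since $i_0\notin g+g'$. In the size-$\geq 4$ case, $\tau_{i',j}=0$ and~\eqref{eq:dualLPsimplified2} gives $\tau_{i_0,j}\geq 0$; in the size-two case, summing~\eqref{eq:dualLPsimplified2} for $i_0$ against each of the two elements of $f_j\cap(g+g')$ yields $2\tau_{i_0,j}\geq 0$. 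Either way $\tau_{i_0,j}\geq 0$ for every $j\in N_X(i_0)$, which contradicts~\eqref{eq:dualLPsimplified1} at $i_0$ where $\gamma_{i_0}=-1$.

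The main obstacle is the case split on $|f_j\cap(g+g')|$ and the careful use of the hypothesis exactly where needed. The condition $f_j\cap(g+g')\neq\emptyset$ for $j\in N_X(g\cap g')$ is what supplies a partner qubit $i'$ to pair with $i_0$ via~\eqref{eq:dualLPsimplified2}; without it, a check containing $i_0$ could miss $g+g'$ entirely, leaving $\tau_{i_0,j}$ unconstrained in sign and letting the dual flow escape the corrupted qubit in $g\cap g'$.
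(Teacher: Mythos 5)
Your proposal is correct and follows essentially the same route as the paper's proof: argue by contradiction via the dual LP, apply Lemma~\ref{lem:halfZstab} to the composite stabilizer $g+g'$ (after checking $|e\cap(g+g')|=|g+g'|/2$), and use the hypothesis on $X$ checks in $N_X(g\cap g')$ to derive a contradiction at the corrupted qubit of $g\cap g'$. The only difference is that you make the paper's informal ``spreading extra poison'' step fully rigorous via the double-counting argument that forces each inner sum $\sum_{i\in f_j\cap(g+g')}\tau_{ij}$ to vanish and hence $\tau_{i_0,j}\ge 0$ for all $j\in N_X(i_0)$, which is a welcome tightening rather than a different approach.
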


\begin{proof}
	Assume for a contradiction that $e$ can be corrected by the LP. Because $|e\cap (g+g')|=|g+g'|/2$, Lemma~\ref{lem:halfZstab} implies Inequality~\eqref{eq:dualLPsimplified1} is saturated for all $i\in g+g'$. But to remove the one unit of poison in $g\cap g'$, it must go through an $X$ check $j\in N_X(g\cap g')$. This check also has support on $g+g'$ and will therefore spread extra poison to the set, rendering Inequality~\eqref{eq:dualLPsimplified1} impossible for some $i\in g+g'$.
\end{proof}

\begin{figure}[htpb]
	\centering
	\begin{subfigure}{0.16\textwidth}
		\centering
		\includegraphics[width=\linewidth]{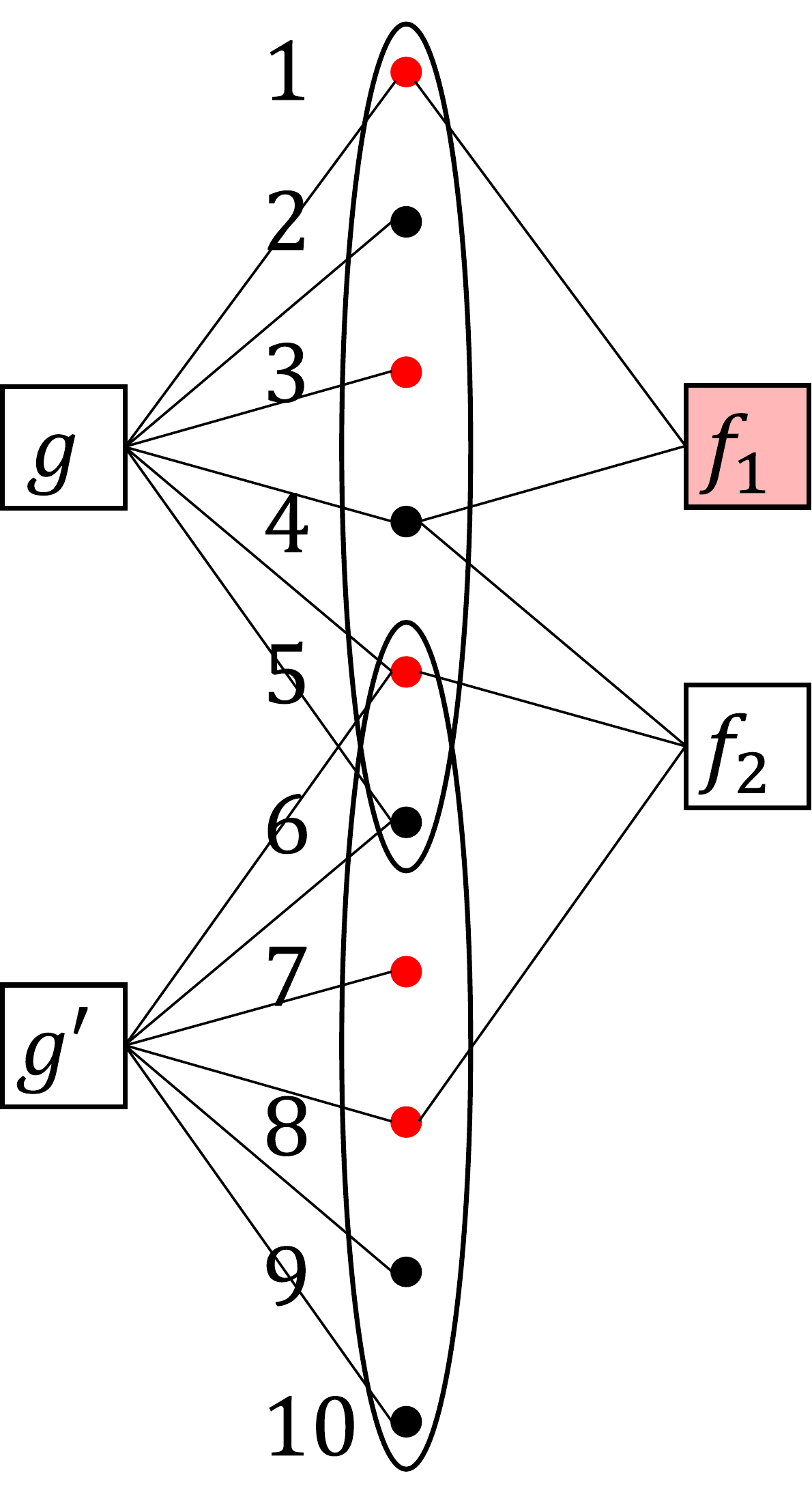}
		\caption{}
		\label{subfig:Zoverlap}
	\end{subfigure}
	\hspace{0.6em}
	\begin{subfigure}{0.3\textwidth}
		\centering
		\includegraphics[width=\linewidth]{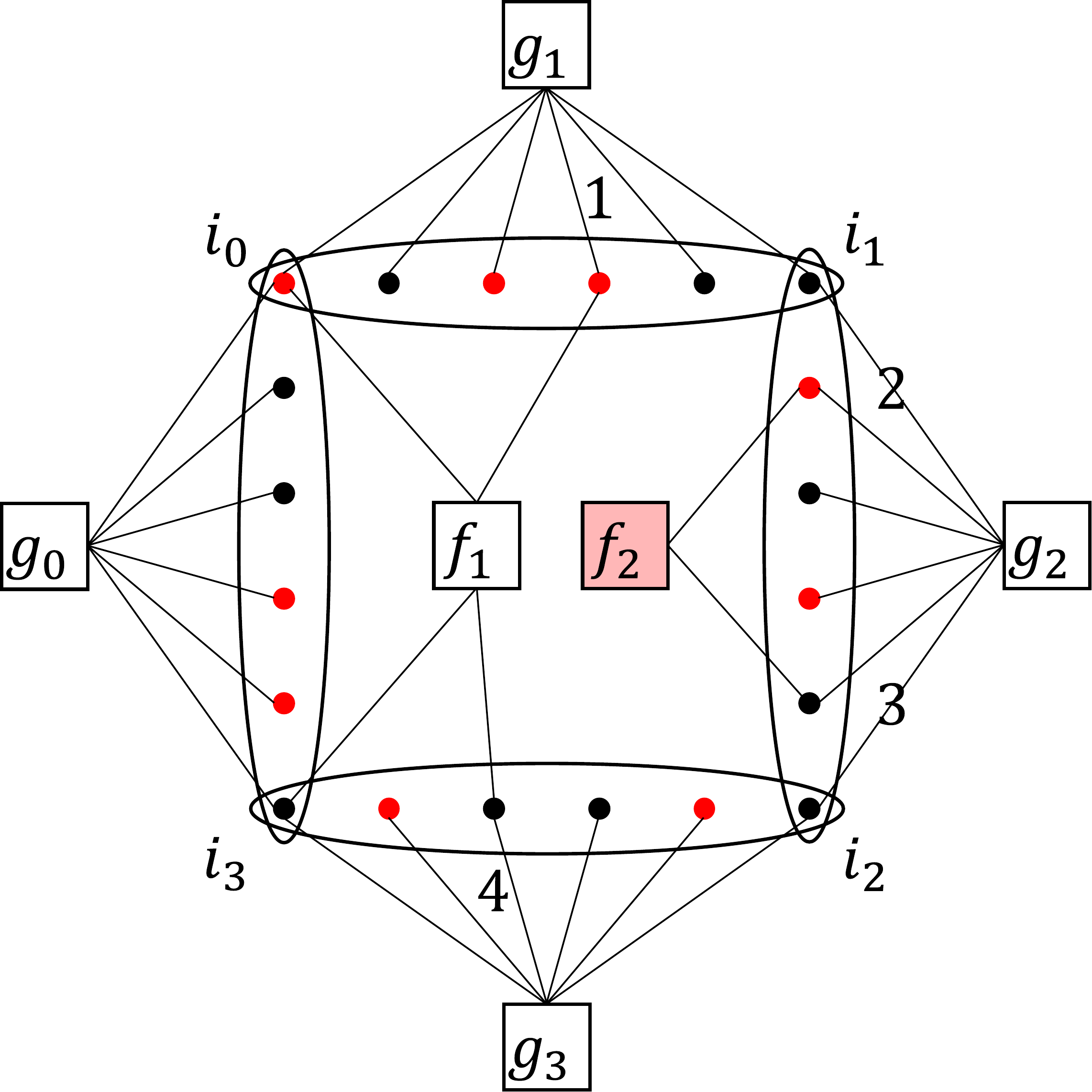}
		\caption{}
		\label{subfig:Zcycle}
	\end{subfigure}
	\caption{Error patterns that are uncorrectable by LP~\eqref{eq:LPRelaxationPrimal1}. In the parts of the Tanner graphs illustrated, dots represent qubits with $Z$ errors affecting those in red, and squares represent checks with the violated $X$ checks highlighted. \subref{subfig:Zoverlap} Proposition~\ref{counter:Zoverlap}. An objective value of 4 is achieved with $x_i=1/2$ for $i=1,2,3,4,7,8,9,10$ and $w_{f_1, \{1\}} = w_{f_1, \{4\}} = w_{f_2, \{4, 8\}} = w_{f_2, \emptyset} = 1/2$. \subref{subfig:Zcycle} Proposition~\ref{counter:Zcycle}. An objective value of 8 is achieved with $x_i=1/2$ for $i\ne i_0, i_1, i_2, i_3$ and $w_{f_1, \{1, 4\}} = w_{f_1, \emptyset} = w_{f_2, \{2\}} = w_{f_2, \{3\}} = 1/2$.}
	\label{fig:counterexamples}
\end{figure}

The above proposition states that there will typically be an uncorrectable error pattern if two $Z$ stabilizers overlap on at least two qubits. By commutation with $Z$ stabilizers, $f_j$ must have at least weight-two support on $g$ and $g'$. The $X$-check condition expresses that its support within $g\cup g'$ cannot be limited to $g\cap g'$. The weight of the error constructed cannot be reduced by adding $g$, $g'$, or $g+g'$. Given the many ways to choose $e$ satisfying the conditions, it is likely that one choice will be reduced and thus uncorrectable by LP.

We can explicitly demonstrate a fractional solution of the LP~\eqref{eq:LPRelaxationPrimal1} with objective value smaller than $|e|$ that causes the LP decoder to fail. Set
\begin{equation}
	\label{eq:Zoverlapprimalsol}
	x_i = \begin{cases}
		1/2, & i\in g+g',\\
		0, & \text{otherwise}\, ,
	\end{cases}
\end{equation}
which gives objective value $|g+g'|/2 = |e| - 1$. To define the $w_{j,S}$ variables, note that for any check $j$, the overlap $S_j \coloneqq f_j\cap (g+g')$ has even cardinality due to commutativity of the checks. Suppose $|S_j|\ge 2$. If $s_j=0$, we set $w_{j,S_j} = w_{j,\emptyset}= 1/2$. If $s_j=1$, let $i\in S_j$, and we set $w_{j,\{i\}} = w_{j,S_j\setminus \{i\}} = 1/2$. When $S_j = \emptyset$, the set $f_j$ cannot have support on $g\cap g'$ either by the conditions of the proposition, so $s_j=0$, and we define $w_{j,\emptyset} = 1$. All other $w_{j,S}$ are set to 0. Fig.~\ref{subfig:Zoverlap} illustrates an example of the error pattern in Proposition~\ref{counter:Zoverlap}.

More generally, we have uncorrectable errors associated with cycles in $Z$ Tanner graph $G_Z$.

\begin{proposition}
	\label{counter:Zcycle}
	Let $g_0,g_1,\dots,g_{K-1},g_K=g_0$ be $Z$ stabilizers of even weight with intersections $g_k\cap g_{k+1} = \{i_k\}$ for all $k$ and $g_k\cap g_{k'} = \emptyset$ for $|k-k'|\ge 2$.
	Suppose all $X$ checks $j\in N_X(I)$ satisfy $f_j\cap(\sum_{k=0}^{K-1} g_k) \ne \emptyset$, where $I = \{i_k\}_{k=0}^{K-1}$.
    Then the LP decoder~\eqref{eq:LPRelaxationPrimal1} cannot correct a reduced error $e$ defined with $\res{e}{I} = i_0$ and $|\res{e}{g_k}| = |g_k|/2 - 1$ for all $k$.
\end{proposition}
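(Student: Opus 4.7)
The plan is to mirror the proof of Proposition~\ref{counter:Zoverlap}, with the $Z$ stabilizer $G := \sum_{k=0}^{K-1} g_k$ playing the role of $g+g'$. Since each $i_k \in I$ lies in exactly two of the $g_k$'s (namely $g_k$ and $g_{k+1}$, by the non-adjacent disjointness hypothesis), the elements of $I$ cancel in the symmetric difference, so $G$ is the disjoint union of the sets $g_k \setminus I$. A direct count then gives $|G| = \sum_k (|g_k| - 2)$ (even), and the prescribed error satisfies $|e \cap G| = |G|/2$ while the extra corrupted qubit $i_0$ lies in $I$ and hence outside $G$.

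Next I would argue by contradiction: assuming the LP outputs an integer solution for the reduced error $e$, the primal of the error-based LP~\eqref{eq:errorbasedLP} has optimum $0$ and, by Lemma~\ref{lem:duality}, so does the dual~\eqref{eq:dualLP}. As in the proof of Proposition~\ref{counter:Zoverlap}, this forces each $\sigma_j = 0$ and the edge weights $\tau_{ij}$ to satisfy the simplified Inequalities~\eqref{eq:dualLPsimplified1}--\eqref{eq:dualLPsimplified2}. Applying Lemma~\ref{lem:halfZstab} to $G$ then saturates~\eqref{eq:dualLPsimplified1} for every $i \in G$.

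Since $\gamma_{i_0} = -1$, the inequality $\sum_{j \in N_X(i_0)} \tau_{i_0 j} \leq -1$ forces at least one $j \in N_X(i_0) \subseteq N_X(I)$ with $\tau_{i_0 j} < 0$. By the proposition's hypothesis, $f_j \cap G \neq \emptyset$, so any $i' \in f_j \cap G$ satisfies $\tau_{i' j} \geq -\tau_{i_0 j} > 0$ by~\eqref{eq:dualLPsimplified2}. Summing the saturated~\eqref{eq:dualLPsimplified1} over $i \in G$ gives $0 = \sum_{i \in G} \gamma_i = \sum_{i \in G} \sum_{j'} \tau_{i j'} = \sum_{j'} \sum_{i \in f_{j'} \cap G} \tau_{i j'}$. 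Because $|f_{j'} \cap G|$ is even by commutation with the stabilizer $G$, pairing up its elements and invoking~\eqref{eq:dualLPsimplified2} shows each inner sum is non-negative, while the inner sum at our chosen $j$ is strictly positive---a contradiction.

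The main obstacle is the bookkeeping for $|e \cap G| = |G|/2$ together with the pairing step that upgrades the saturation of~\eqref{eq:dualLPsimplified1} into a rigid balance forbidding external inflow into $G$. The fractional primal witness indicated in Fig.~\ref{subfig:Zcycle}---with $x_i = 1/2$ on $G$ and the $w_{j,S}$'s distributed to match each check's even intersection with $G$---also confirms LP failure directly, achieving objective $|G|/2 < |e|$.
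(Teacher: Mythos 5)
Your proof is correct and follows essentially the same route as the paper's: reduce to the simplified dual inequalities, apply Lemma~\ref{lem:halfZstab} to the stabilizer $\sum_{k}g_k$ to force saturation of~\eqref{eq:dualLPsimplified1} on its support, and then derive a contradiction from the poison that must leave $i_0$ through some $j\in N_X(i_0)$ with $f_j\cap\sum_k g_k\ne\emptyset$. Your final step---summing the saturated inequalities over $i\in\sum_k g_k$, regrouping by checks, and using the evenness of $|f_{j'}\cap\sum_k g_k|$ with~\eqref{eq:dualLPsimplified2} to show every inner sum is nonnegative and the one at $j$ strictly positive---is a welcome rigorous rendering of the paper's informal ``spreads too much poison'' conclusion.
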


\begin{proof}
	Suppose for a contradiction that $e$ can be corrected by the LP. Because $|e\cap \sum_{k=0}^{K-1}g_k| = |\sum_{k=0}^{K-1}g_k|/2$, Lemma~\ref{lem:halfZstab} implies that Inequality~\eqref{eq:dualLPsimplified1} is saturated for all $i\in \sum_{k=0}^{K-1}g_k$. Now consider the poison on $i_0$. It must go through an $X$ check $j\in N_X(i_0)$. Because $f_j$ also has support on $\sum_{k=0}^{K-1}g_k$, it will spread too much poison to that set, and Inequality~\eqref{eq:dualLPsimplified1} will be violated for some qubit in $\sum_{k=0}^{K-1}g_k$.
\end{proof}

We remark that the even-weight stabilizers condition in the proposition is not restrictive because we can always add two odd-weight stabilizer generators to obtain an even-weight stabilizer.
Again, no stabilizer generated by the $\{g_k\}$ can reduce the weight of $e$ as defined, so $e$ will generically be reduced.

A fractional solution of the syndrome-based LP~\eqref{eq:LPRelaxationPrimal1} is obtained by setting
\begin{equation}
	\label{eq:Zcycleprimalsol}
	x_i = \begin{cases}
		1/2, & i\in \sum_{k=0}^{K-1}g_k,\\
		0, & \text{otherwise}\, .
	\end{cases}
\end{equation}
This gives objective value $\frac 1 2 |\sum_{k=0}^{K-1}g_k| = |e| - 1 < |e|$. Note that for any check $j$, the overlap $S_j \coloneqq f_j\cap \sum_{k=0}^{K-1}g_k$ has even cardinality. This allows us to define the auxiliary variables $w_{j,S}$ as follows. When $|S_j|\ge 2$ and $s_j = 0$, we set $w_{j,S_j} = w_{j,\emptyset}= 1/2$. When $|S_j|\ge 2$ and $s_j = 1$, we let $i\in S_j$ and set $w_{j,\{i\}} = w_{j,S_j\setminus \{i\}} = 1/2$. When $S_j = \emptyset$, we must have $s_j=0$, so we set $w_{j,\emptyset} = 1$. All other $w_{j,S}$ are set to 0. See Fig.~\ref{subfig:Zcycle} for an example of the error pattern in Proposition~\ref{counter:Zcycle}.

Because the fractional solutions in Propositions~\ref{counter:Zoverlap} and~\ref{counter:Zcycle} have variables $x_i=1/2$, independent rounding will not give a valid solution most of the time, as it is unclear whether to round each coordinate to 0 or 1.

In Proposition~\ref{counter:Zcycle}, the $Z$ stabilizers give the cycle $0, i_0, 1, i_1, \dots, K-1, i_{K-1}, K=0$ in $G_Z$ when the $g_k$ are generators.
If the condition $g_k\cap g_{k'} = \emptyset$ for $|k-k'|\ge 2$ is not satisfied, we could obtain a smaller cycle in $G_Z$ that we can apply the proposition to.
The case where $|g_k\cap g_{k+1}|\ge 2$ is covered by Proposition~\ref{counter:Zoverlap} (which also gives the cycle $k_0, i_0, k_1, i_1, k_0$ in $G_Z$, where $k_0$ and $k_1$ label the $Z$ stabilizer generators $g$ and $g'$, respectively, and $i_0, i_1\in g\cap g'$). In summary, cycles in $G_Z$ typically give rise to error patterns that are not correctable by LP.

This observation implies that quantum LDPC codes can exhibit small, uncorrectable errors under LP decoding, with the size of such errors determined by the girth (i.e., the length of the shortest cycle) of the Tanner graph $G_Z$. How large can this girth be? A standard counting argument shows that the girth of $G_Z$ is at most $O(\log n)$. Because each stabilizer generator has constant size, the problematic error patterns arising from the shortest cycle of $G_Z$ will also have weight $O(\log n)$. We also note that for cycles of size greater than the stabilizer weights, the condition $f_j\cap \sum_{k=0}^{K-1} g_k\ne \emptyset$ for $j\in N_X(I)$ is automatically satisfied since otherwise, commutation with each of the $g_k$ would imply $f_j\supseteq I$. Thus, the LP decoder in general cannot correct adversarial errors of weight $\omega(\log n)$. Moreover, if it could correct stochastic errors, the suppression of errors below the threshold would be at most polynomial (instead of exponential) in the blocklength.

Some quantum LDPC codes do have logarithmically scaling girths in their $X$ and $Z$ Tanner graphs~\cite{pacentivasic2024quantummarguliscodes}. However, many code constructions exhibit \emph{constant} girth and would likely support constant-sized uncorrectable errors.
In the next section, we present examples of such codes.

\subsection{Uncorrectable errors in different code families}
Using Propositions~\ref{counter:Zoverlap} and~\ref{counter:Zcycle}, we show that many families of quantum LDPC codes exhibit problematic structures in their Tanner graphs, giving strong evidence that LP by itself does not have a threshold for these codes. To do this, we exhibit constant-sized cycles in $G_Z$ and show that the condition on the $X$ checks is satisfied. Without knowing the specifics of the code, we often cannot certify that error $e$ constructed is reduced, as even though combinations of $Z$ generators from the cycle cannot reduce the error weight, other $Z$ stabilizers might. However, given that many constant-sized cycles are present in the code and $e$ may be defined in many ways from a given cycle, it is unlikely that no such errors would be reduced.

Consider an HGP code $\mc C$ from two classical LDPC codes $C$, $C'$. For simplicity, assume that the generator weights of $\mc C$ are even. Otherwise, we can make the same argument with $Z$ stabilizers which are sums of two generators. We exhibit a constant-length cycle in the $Z$ Tanner graph by considering two cases.

First, suppose $T(C)$ and $T(C')$ have girths greater than six.
Let $P = b_1,a_1,b_2,a_2,b_3$ and $P' = a_1',b_1',a_2',b_2',a_3'$ be paths in the $T(C)$ and $T(C')$, respectively, where each $a_i\in A$, $a_i'\in A'$, $b_i\in B$, and $b_i'\in B'$. Consider the cycle $\Gamma$ in $G_Z$ obtained by concatenating $P\times a_1'$, $b_3\times P'$, $P\times a_3'$, and $b_1\times P'$, i.e., 
\begin{align}
	&(b_1,a_1'), (a_1,a_1'), (b_2,a_1'), (a_2,a_1'), (b_3,a_1'), (b_3,b_1'), (b_3,a_2'),\nn\\
	&(b_3,b_2'), (b_3,a_3'), (a_2,a_3'), (b_2,a_3'), (a_1,a_3'), (b_1,a_3'), (b_1,b_2'),\nn\\
	&(b_1,a_2'), (b_1,b_1'), (b_1,a_1')\, . \label{eq:HGPcycle}
\end{align}
We show that the intersections of $Z$ generators from $\Gamma$ are exactly the qubits in the cycle. For two generators to overlap, either the first or the second coordinate must be the same. Without loss of generality, suppose $(b_i, a'), (b_j, a')\in \Gamma$ have a common neighbor $(a, a')\notin \Gamma$. Then the subpath of $P$ from $b_i$ to $b_j$ concatenated with $b_j, a, b_i$ would give a cycle in $T(C)$ of length at most six, contradicting its girth.

Now, suppose one of the Tanner graphs of the classical codes, say $T(C)$, has girth at most six. Let $P = b_1, a_1, \dots, b_r = b_1$ be a minimal cycle, where each $a_i\in A$ and $b_i\in B$ and $r\le 3$. Then $P\times a'$ for any $a'\in A$ gives a cycle in $G_Z$. Intersections of two generators from this cycle cannot overlap on other qubits due to the minimality of the cycle $P$.

Let $I$ be the qubits in $\Gamma$. There is no $X$ check that contains $I$ since the qubits in $A\times A'$ do not all have the same first coordinate. Therefore, any $X$ check $j\in N_X(I)$ intersects the sum of the $Z$ checks in the cycle due to commutation with each $Z$ check. Aside from the reduced error condition, Proposition~\ref{counter:Zcycle}, or Proposition~\ref{counter:Zoverlap} if $r=2$ in the second case, gives a constant-weight error supported on the $Z$ generators of the cycle that cannot be corrected by the LP decoder. See Fig.~\ref{fig:HPCcounterexample} for an example.

\begin{figure}[ht]
	\centering
	\includegraphics[width=\columnwidth]{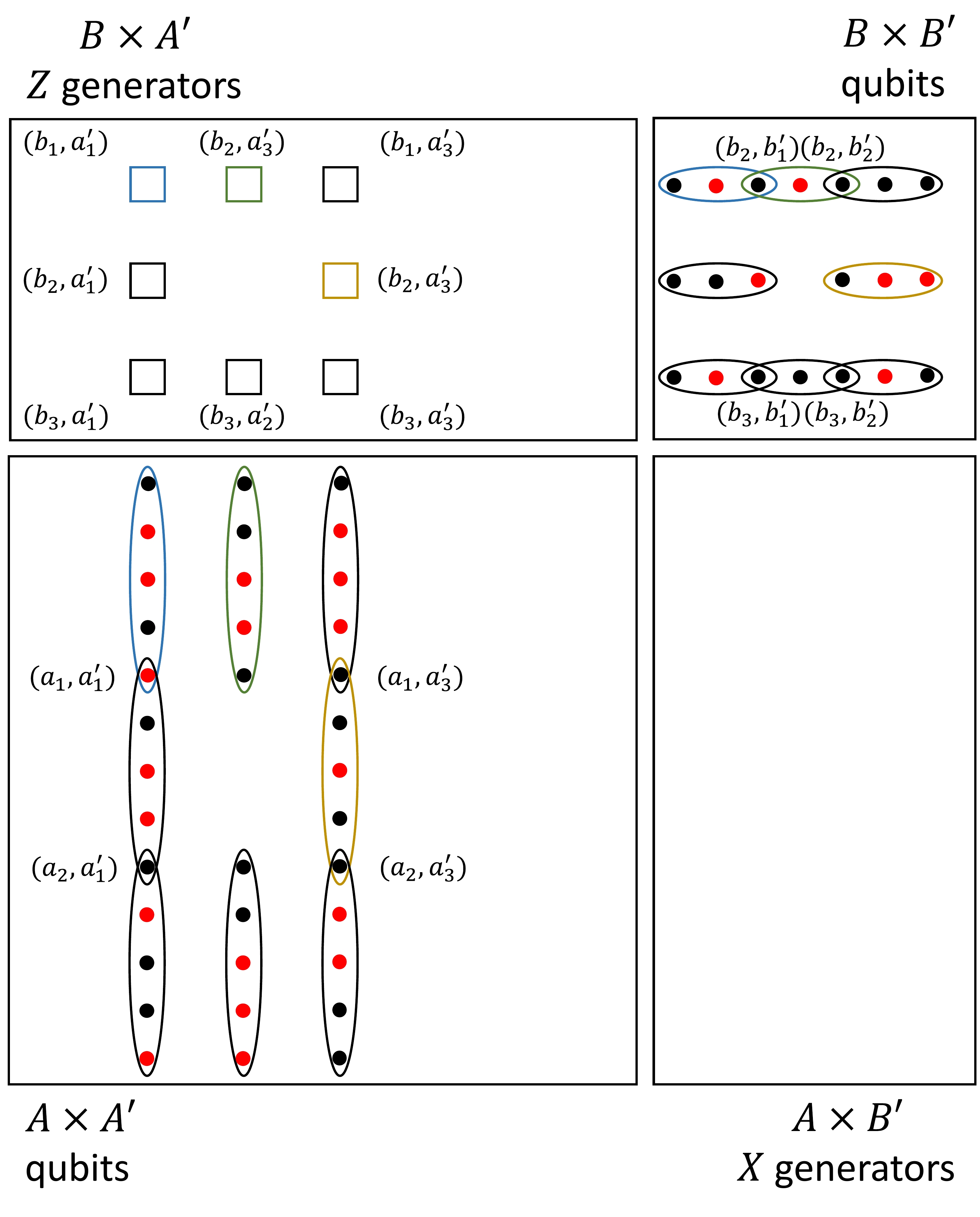}
	\caption{A constant weight error pattern in an HGP code that cannot be corrected using LP. The cycle~\eqref{eq:HGPcycle} is labeled, and some of the $Z$ generators and their qubit supports are shown with different matching colors. Corrupted qubits are colored in red.}
	\label{fig:HPCcounterexample}
\end{figure}

Because lifted product codes are quotients of HGP codes, they also typically contain constant-weight uncorrectable errors. Let $\overline{\mc C}$ be a lifted product code with check weights at most $w = O(1)$, which is obtained as the quotient of an HGP code $\mc C$ by a group $G$. Any cycle in the $Z$ Tanner graph of $\mc C$ projects to a cycle in the $Z$ Tanner graph of $\overline{\mc C}$. In particular, consider the cycle in $\mc C$ created from the two paths $P=b_1,a_1,\dots, b_r$ and $P'=a_1',b_1',\dots,a_r'$ by concatenating $P\times a_1'$, $b_r\times P'$, $P\times a_r'$, and $b_1\times P'$. If we assume that the projection does not identify two distinct points in the cycle and that the projected cycle in $T(\overline{\mc C})$ is minimal, we obtain an uncorrectable error pattern. By choosing $r=w/4+2$, the $X$-check condition is automatically satisfied as the cycle has length $4(r-1) > w$.

In particular, because the surface code is an HGP code and BB codes are lifted product codes, they also contain constant-weight error patterns that are not correctable by LP.

We remark that in a recent work, Grou\`es~\cite{groues2022decoding} has independently discovered a different type of constant-weight uncorrectable error pattern in certain HGP codes, which is also related to the presence of $Z$ stabilizers but has no immediate connection to cycles in the Tanner graph.

\section{LP+OSD decoder}\label{sec:LPOSD}
We now describe our proposed decoder based on LP and OSD post-processing to decode quantum LDPC codes. For a CSS code with parity check matrices $H_X$, $H_Z$, suppose we are given the syndrome $s$ of a $Z$ error. We first run the syndrome-based LP~\eqref{eq:LPRelaxationPrimal1} to obtain a solution $x$. If $x$ is integral, then it is guaranteed to be the minimum-weight error giving the syndrome $s$, so we output this solution. Otherwise, $x$ contains fractional components. To obtain an integral solution satisfying the syndrome $s$, we apply OSD post-processing, where we interpret the fractional components $x_i$ as the probability that qubit $i$ suffered a $Z$ error. Either zeroth- or higher-order OSD may be used, depending on the accuracy desired.

One thing to note is that a fractional solution from the LP often contains many entries which are zero. Thus, there is freedom in ordering the corresponding qubits for OSD post-processing, which may affect the rounded solution. In the examples of LP decoding failure found in Sec.~\ref{sec:uncorrectableLPexamples}, the LP assigns zero to one qubit in the error set, and it would be beneficial if this coordinate was erased by OSD. This qubit would typically be connected to a nontrivial syndrome in the Tanner graph. Therefore, we use the heuristic that qubits which are close to the nonzero syndromes are more likely to have an error than those which are farther away. When ordering the qubits according to probability of being flipped for OSD, we break ties by the distance to a nonzero syndrome. Pseudocode for our algorithm is given in Algorithm~\ref{alg:LPOSD}.

\begin{algorithm}[H]
	\caption{LP+OSD Decoder}
	\label{alg:LPOSD}
	\textbf{Input:} Parity-check matrix $H_X$, syndrome $s$ of a $Z$ error
	\textbf{Output:} Correction $\tilde e$
	\begin{algorithmic}[1]
		\State $x \gets$ solution to the LP~\eqref{eq:LPRelaxationPrimal1}
		\If{$x$ is integral} 
			\State \Return $x$ \Comment{minimum-weight error consistent with $s$}
		\Else
			\ForAll{$i\in Q$}
				\State $d_i \gets \operatorname{dist}_{G_X}(i, \{j\in C_X: s_j=1\})$
			\EndFor
			\State $Q'\gets$ qubits sorted by descending order of $x_i$, breaking ties by ascending order of $d_i$
			\State $\tilde e\gets \operatorname{OSD-0}(s)$ or $\operatorname{OSD-CS}(s)$ \Comment{$\mc S, \mc T$ defined using $Q'$}
			\State \Return $\tilde e$
		\EndIf
	\end{algorithmic}
\end{algorithm}

The decoder as described works in the code capacity setting. However, it can easily be adapted to handle phenomenological or circuit-level noise; see Appendix~\ref{app:circuitlevelnoise}.

The worst-case complexity of our LP+OSD decoder is $n^{\omega+o(1)}$ in the asymptotic limit, where $\omega\approx 2.37$ is the matrix multiplication exponent~\cite{multiplicationexponent}. Solving the LP requires $n^{\omega+o(1)}$ time for an $n$-qubit LDPC code~\cite{LPcomplexity}.
In the OSD subroutine, the decoder first ranks bits based on probability and, in case of ties, computes shortest distances via Dijkstra’s algorithm. Thus, it takes time $O(n \log n)$ to determine $\mc S$ and $\mc T$~\cite{FredmanTarjan84Dijkstracomplexity}.
We then precompute $H_{\mc S}^{-1}$, which is done in time $n^{\omega+o(1)}$~\cite{Introtoalgs}.
If we run OSD-0, we compute $\tilde e_{\mc S} = H_{\mc S}^{-1}s$ in time $O(n^2)$.
If we run OSD-CS with a constant parameter $\lambda = O(1)$, we need to compute $\tilde e_{\mc S} = H_{\mc S}^{-1}s + H_{\mc S}^{-1}H_{\mc T}\tilde e_{\mc T}$ for all the $O(n)$ candidate values of $\tilde e_{\mc T}$.
This can be done in time $n^{\omega+o(1)}$ if we compute $H_{\mc S}^{-1}H_{\mc T}\tilde E_{\mc T}$, where $\tilde E_{\mc T}$ is a matrix with columns that are the different values of $\tilde e_{\mc T}$.
Picking the solution where $(\tilde e_{\mc S}, \tilde e_{\mc T})$ has the lowest Hamming weight takes time $O(n^2)$.
Therefore, the total running time is $n^{\omega+o(1)}$, although the optimized matrix multiplication algorithms have extremely large constant overheads.
For practical purposes, conventional $O(n^3)$ matrix multiplication and inversion would be used, and the LP would be solved using the simplex algorithm, which is often fast but has exponential complexity in the worst case.

\subsection{Problematic error patterns under LP+OSD}
\label{subsec:counterexamplesOSD}
Now we show how OSD post-processing can help in correcting the problematic error patterns in Sec.~\ref{sec:uncorrectableLPexamples}. First, let $e$ be defined as in Proposition~\ref{counter:Zoverlap} from constant-weight stabilizers $g$ and $g'$.
Consider $x$ from Eq.~\eqref{eq:Zoverlapprimalsol}, which is a fractional solution to the syndrome-based LP~\eqref{eq:LPRelaxationPrimal1}.
When OSD from Sec.~\ref{subsec:OSD} is applied to this solution, it will erase a set of qubits $\mc S$, starting from the elements of $g+g'$ because they are assigned higher probability of being corrupted by the LP. The fact that the columns of $H_X$ corresponding to $\mc S$ are linearly independent is equivalent to $\mc S$ not containing any $Z$ stabilizers or $Z$ logical operators. Assuming that the code has macroscopic distance and that there are no other stabilizer generators contained in $g\cup g'$, the only relevant stabilizers are $g$ and $g'$. Therefore, $\mc S$ contains all qubits of $g+g'$ except one, which we denote $i_1$. Without loss of generality, we can assume that $i_1\notin e$; otherwise, the same analysis would hold for the equivalent error $e+g+g'$ and we would show that OSD returns this error instead.

Consider the case when $\mc S$ contains the corrupted qubit $i_0\in g\cap g'$. Then $e\subseteq \mc S$, so $e$ is a valid solution for the syndrome that is consistent with the unerased qubits set to 0. Because there is a unique solution satisfying the syndrome such that $\tilde e_{\mc T}=0$, OSD-0 will return the actual error $\tilde e = e$.

If $\mc S$ does not contain the qubit $i_0$, OSD-0 might not find the solution $e$ because it will set $\tilde e_{i_0} = 0$ (although it may find a stabilizer-equivalent solution). However, when we search over weight-one configurations of $\tilde e_{\mc T}$ in OSD-CS, one of the them will have $\tilde e_{i_0} = 1$ and $\res{\tilde e}{\mc T\setminus\{i_0\}} = 0$. In that iteration, the solution $e$ will be found since $\res{e}{\mc T} = \res{\tilde e}{\mc T}$.
There are no logically inequivalent errors with lower weight as $e$ has constant weight and the code has growing distance. Therefore, the final output of OSD-CS will be an error equivalent to $e$.

The error configuration of Proposition~\ref{counter:Zcycle} is similar. Let $e$ be the error defined from constant-weight stabilizers $g_0$, \dots, $g_{K-1}$, and consider $x$ defined by Eq.~\eqref{eq:Zcycleprimalsol}. Assuming the code has macroscopic distance and no other $Z$ stabilizers are contained in $\cup_{k=0}^{K-1}g_k$, the only relevant stabilizers are the $g_k$. The erased set $\mc S$ will contain all qubits in $\sum_{k=0}^{K-1}g_k$ except one, which we can assume is not an error without loss of generality by possibly replacing $e$ with the equivalent error $e+\sum_{k=0}^{K-1}g_k$. Using the same reasoning as before, the solution $e$ will be found by OSD-0 if $i_0\in S$ or by OSD-CS if $i_0\notin \mc S$.

\begin{figure*}[htpb]
	\centering
	\begin{subfigure}{0.32\textwidth}
		\centering
		\includegraphics[trim={0.35cm 0.4cm 0.3cm 0.35cm},clip,width=\linewidth]{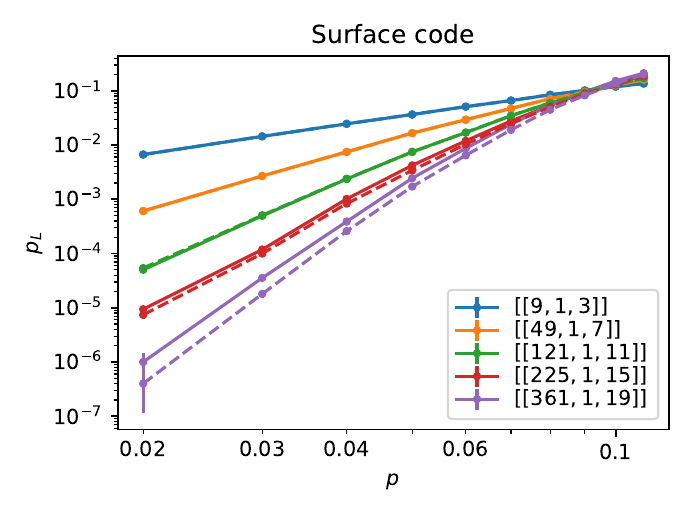}
		\caption{}
		\label{subfig:SC}
	\end{subfigure}
	\begin{subfigure}{0.32\textwidth}
		\centering
		\includegraphics[trim={0.35cm 0.4cm 0.3cm 0.35cm},clip,width=\linewidth]{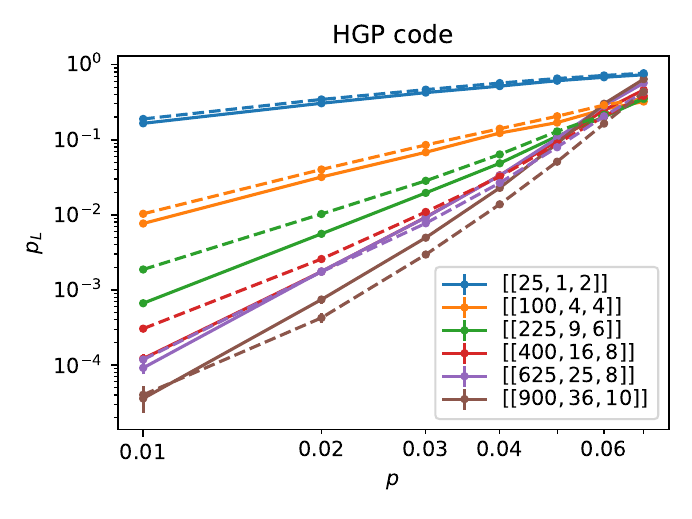}
		\caption{}
		\label{subfig:HGP}
	\end{subfigure}
	\begin{subfigure}{0.32\textwidth}
		\centering
		\includegraphics[trim={0.35cm 0.4cm 0.3cm 0.35cm},clip,width=\linewidth]{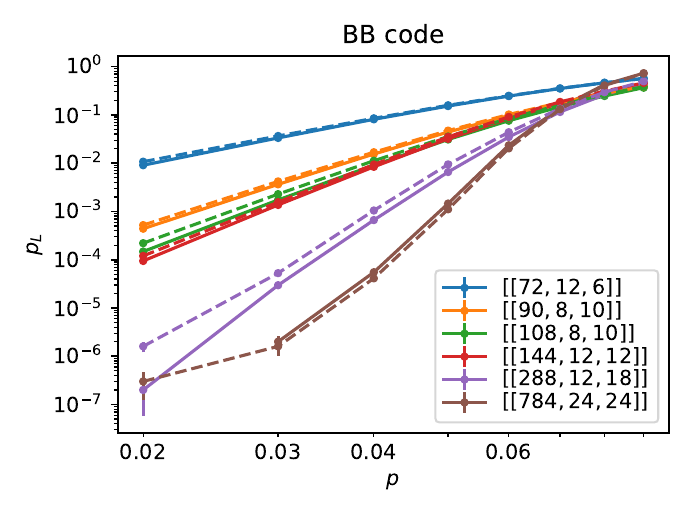}
		\caption{}
		\label{subfig:BB}
	\end{subfigure}
	\caption{Comparison of the LP+OSD-CS decoder (solid lines) with the BP+OSD-CS decoder (dashed lines) for \subref{subfig:SC} the surface code, \subref{subfig:HGP} random HGP codes, and \subref{subfig:BB} BB codes.}
	\label{fig:mainresults}
\end{figure*}

\begin{figure*}[htpb]
	\centering
	\begin{subfigure}{0.32\textwidth}
		\centering
		\includegraphics[trim={0.35cm 0.4cm 0.3cm 0.35cm},clip,width=\linewidth]{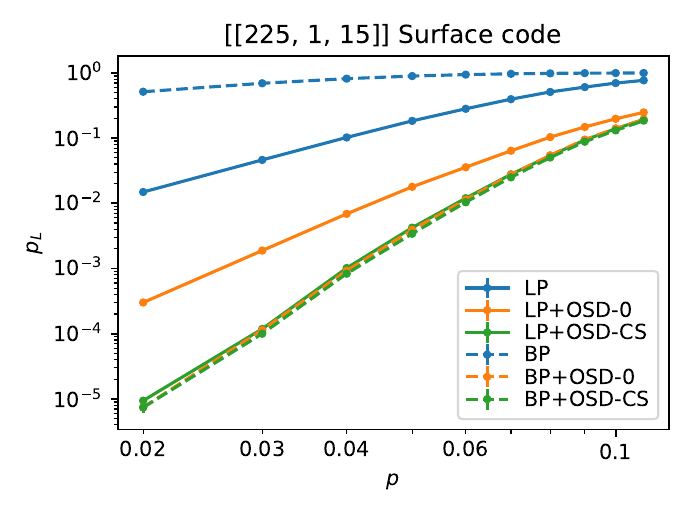}
		\caption{}
		\label{subfig:SC_decodingcomparison}
	\end{subfigure}
	\begin{subfigure}{0.32\textwidth}
		\centering
		\includegraphics[trim={0.35cm 0.4cm 0.3cm 0.35cm},clip,width=\linewidth]{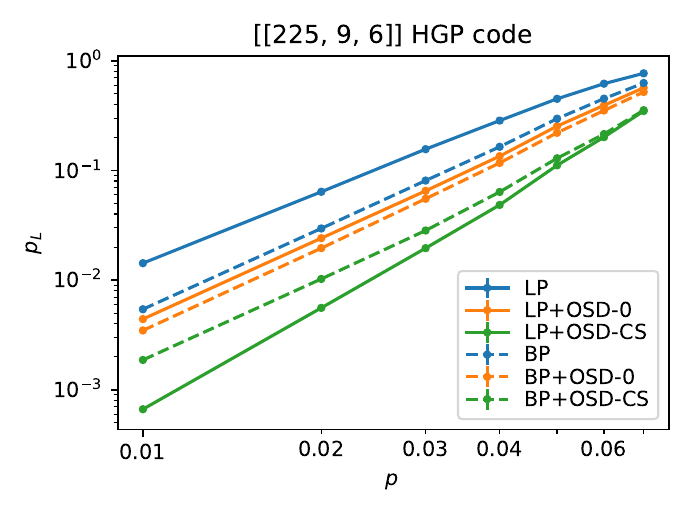}
		\caption{}
		\label{subfig:HGP_decodingcomparison}
	\end{subfigure}
	\begin{subfigure}{0.32\textwidth}
		\centering
		\includegraphics[trim={0.35cm 0.4cm 0.3cm 0.35cm},clip,width=\linewidth]{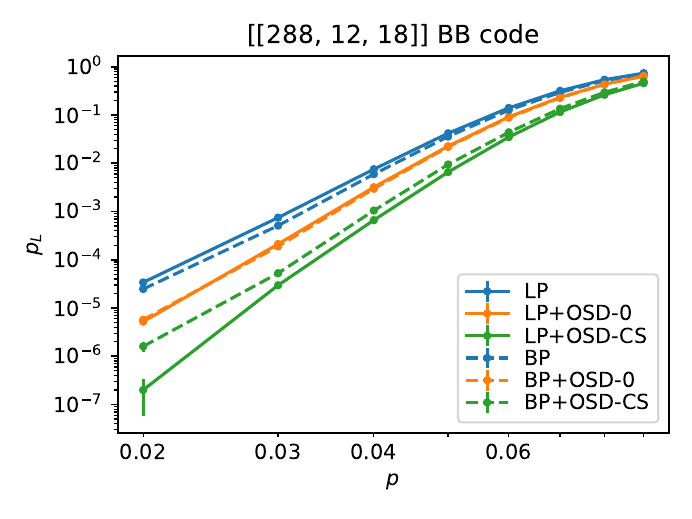}
		\caption{}
		\label{subfig:BB_decodingcomparison}
	\end{subfigure}
	\caption{Comparison of LP and BP decoders with different OSD post-processing methods for \subref{subfig:SC_decodingcomparison} the $[[225, 1, 15]]$ surface code, \subref{subfig:HGP_decodingcomparison} random HGP codes with parameters $[[225, \ge 9, \ge 6]]$, and \subref{subfig:BB_decodingcomparison} the $[[288, 12, 18]]$ BB code.}
	\label{fig:decodingcomparison}
\end{figure*}

\subsection{Comparison with other LP-based decoders}
Before presenting our numerical results, we discuss the similarities and differences between our LP+OSD decoder and other LP-based decoders. For the surface code and certain other topological codes, minimum-weight perfect matching (MWPM) gives the most likely error with a given syndrome. Efficient implementations of MWPM are often based on the blossom algorithm~\cite{Edmonds1965blossom,Fowler2012decoder,Higgott2025sparseblossom,Wu2023fusionblossom}, which is also formulated in terms of an LP relaxation of the IP that describes the minimum-weight solution of the decoding problem. In that setting, because the algorithm makes use of the special structure of the decoding problem, i.e., each error causing at most two syndromes, the algorithm always outputs the optimal solution in polynomial time.

For general quantum LDPC codes, the hypergraph minimum-weight parity factor (MWPF) decoder has been proposed as a generalization of the MWPM decoders~\cite{Wu2025MWPF}. As such, MWPF is also based on solving an LP. The constraints for the LP in MWPF correspond to larger and larger subsets of the decoding graph that are ``invalid clusters'' given the observed syndrome. There is an exponential number of constraints, requiring a specialized algorithm that considers a polynomial number of relevant constraints at any given time. MWPF can often find the most likely error (with a certificate of having done so), but there is no guarantee of the runtime for a general code. In contrast, the LP relaxation we consider based on Refs.~\cite{FawziLP2021, LiVontobel, groues2022decoding}, contains more fine-grained information about the local neighborhoods of each check. Because there are only a linear number of total constraints, a generic LP solver may be used for the LP portion of the decoder, and it will always terminate in polynomial time. The downside is that fractional solutions are often obtained, necessitating the need for post-processing techniques such as OSD.

\section{Numerical results}
\label{sec:numerics}
To evaluate the effectiveness of LP+OSD, we use it to decode the rotated surface code, random HGP codes, and BB codes. We benchmark its performance by comparing it against BP+OSD~\cite{Panteleev2021degeneratequantum,Roffe2020BPOSD}, which is the currently the most common method of decoding general quantum LDPC codes.
We consider the code capacity setting with independent $Z$ errors of probability $p$ and estimate the corresponding logical error rate $p_L$.
All codes considered are symmetric under exchanging $X$ and $Z$ stabilizers, so it is valid to consider only decoding $Z$ errors.
Further details of the numerical simulations are discussed in Appendix~\ref{app:simulationdetails}.
The source code that was used to perform the simulations and the data that was collected are available in Ref.~\cite{zenodocode}.

The surface code is one of the most well-studied and practical quantum codes with geometrically local stabilizer checks. However, this means that the Tanner graph of the code is not very expanding, and LP by itself is known to perform poorly on it~\cite{FawziLP2021}.

HGP codes are a family of constant-rate quantum LDPC codes which satisfy certain expansion properties when the component classical codes are sufficiently random~\cite{leverrier2015quantum}. LP can perform reasonably well by itself on such codes~\cite{FawziLP2021}.
The HGP codes we use are randomly sampled to have good distance. In particular, they have parameters $[[n=25s^2, k\ge s^2, d\ge d(s)]]$ for $s = 1, 2, 3, 4, 5, 6$ and corresponding lower bounds $d(s) = 2, 4, 6, 8, 8, 10$.

Another notable family of LDPC codes is the recently introduced BB codes~\cite{bravyi2024high}. These codes have weight-six stabilizers that can be implemented using four local interactions—matching the surface code's connectivity—and two additional nonlocal interactions. BB codes demonstrate promising parameters even at small blocklengths and have shown good performance under BP+OSD decoding.
We consider BB codes introduced in Ref.~\cite{bravyi2024high}, which have parameters $[[72, 12, 6]]$, $[[90, 8, 10]]$, $[[108, 8, 10]]$, $[[144, 12, 12]]$, $[[288, 12, 18]]$, and $[[784, 24, 24]]$.

\fig{mainresults} compares the LP+OSD-CS decoder with BP+OSD-CS on the three families of codes. For the surface code, LP+OSD performs comparably to BP+OSD at small code sizes (up to the $[[225, 1, 15]]$ code), while BP+OSD does better as the code size increases. In contrast, LP+OSD consistently outperforms BP+OSD for random HGP codes up to ones with parameters $[[400, \ge 16, \ge 8]]$. It also has slightly lower logical error rates than BP+OSD for BB codes up to the $[[288, 12, 18]]$ code. Small code sizes are the regime where more expensive decoders like BP+OSD or LP+OSD, which have cubic complexity, remain practically viable.

\fig{decodingcomparison} presents a comparison of LP and BP decoding with independent rounding, OSD-0 post-processing, and OSD-CS post-processing for specific codes of 200-300 qubits. The relative performances vary significantly across the surface code, HGP codes, and BB codes. For both LP and BP, OSD-CS does better than OSD-0, which does better than independent rounding. However, the relative improvements from OSD-0 to OSD-CS for LP is greater than those for BP. This may be due to the fact that sometimes LP assigns a value of 0 to a qubit that does have an error (e.g., the error patterns in Sec.~\ref{sec:uncorrectableLPexamples}), which must be searched using a higher-order OSD method.

Recall that in the first step of OSD, the qubits are sorted according to their probability of having an error, and there may be many ties when interpreting an LP solution as probabilities. In \fig{LPdecodingcomparison}, we compare two ways of breaking these ties. The plots show that for the $[[225, 1, 15]]$ surface code, random HGP codes with parameters $[[225, \ge 9, \ge 6]]$, and the $[[288, 12, 18]]$ BB code, breaking ties by distance to a nontrivial syndrome, as described in \secref{LPOSD}, yields lower logical error rates than breaking ties randomly for both OSD-0 and OSD-CS. The same conclusion holds for the other codes in the three families. Therefore, we break ties by distance to a nontrivial syndrome in our other LP+OSD simulations (Figs.~\ref{fig:mainresults} and~\ref{fig:decodingcomparison}).
As expected, LP+OSD-CS outperforms LP+OSD-0, which, in turn, outperforms LP with independent rounding, regardless of the tie-breaking scheme.

\begin{figure}[htpb]
	\centering
	\begin{subfigure}{0.23\textwidth}
		\centering
		\includegraphics[trim={0.35cm 0.4cm 0.35cm 0.35cm},clip,width=\linewidth]{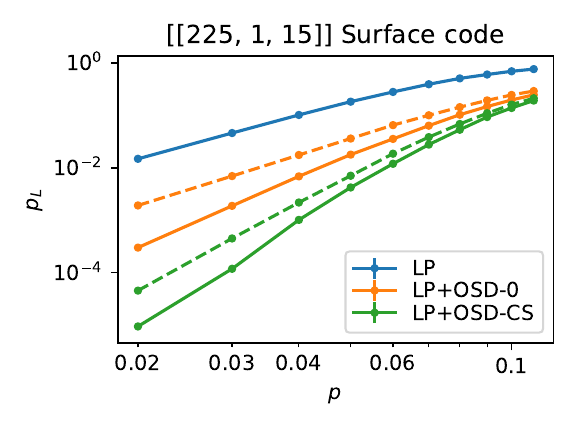}
		\caption{}
		\label{subfig:SC_LPdecodingcomparison}
	\end{subfigure}
	\begin{subfigure}{0.23\textwidth}
		\centering
		\includegraphics[trim={0.35cm 0.4cm 0.35cm 0.35cm},clip,width=\linewidth]{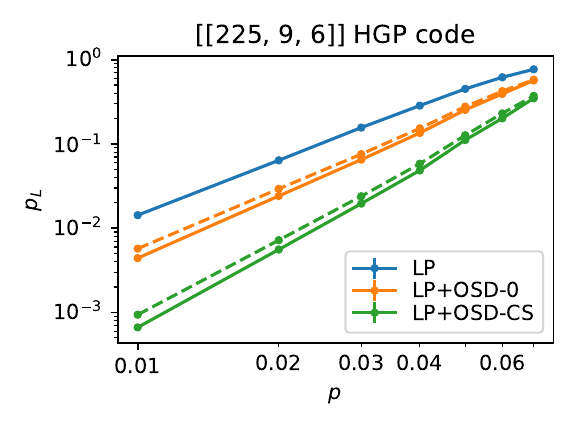}
		\caption{}
		\label{subfig:HGP_LPdecodingcomparison}
	\end{subfigure}
	\begin{subfigure}{0.23\textwidth}
		\centering
		\includegraphics[trim={0.35cm 0.4cm 0.35cm 0.35cm},clip,width=\linewidth]{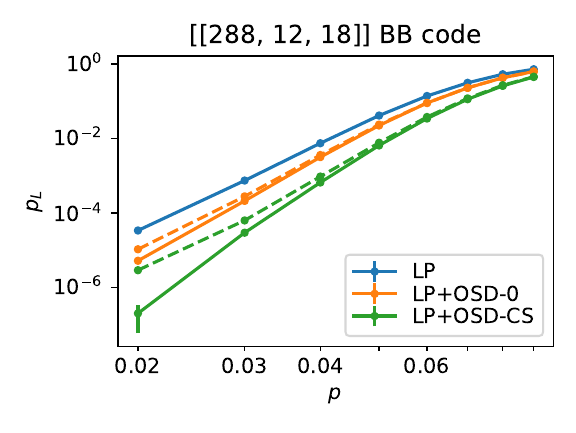}
		\caption{}
		\label{subfig:BB_LPdecodingcomparison}
	\end{subfigure}
	\caption{Comparison of the different LP+OSD decoders. In OSD post-processing, qubits are sorted based on probabilities, with ties broken either based on the distance to a nontrivial syndrome (solid lines) or randomly (dashed lines). We plot the performance for \subref{subfig:SC_LPdecodingcomparison} the $[[225, 1, 15]]$ surface code, \subref{subfig:HGP_LPdecodingcomparison} random HGP codes with parameters $[[225, \ge 9, \ge 6]]$, and \subref{subfig:BB_LPdecodingcomparison} the $[[288, 12, 18]]$ BB code.}
	\label{fig:LPdecodingcomparison}
\end{figure}

Fig.~\ref{fig:wrongsyndromefraction} shows the percentage of times, conditioned on LP failure, that LP with independent rounding gives a solution not consistent with the syndrome of the error. The quantity plotted is $p_{ws}/p_L$, where $p_{ws}$ is the rate of obtaining a rounded solution with the wrong syndrome. The large values of $p_{ws}/p_L$, especially for bigger code sizes, represent the potential gain from using a more sophisticated rounding technique and motivated the exploration of OSD as a post-processing step.

\begin{figure}[htpb]
	\centering
	\begin{subfigure}{0.23\textwidth}
		\centering
		\includegraphics[trim={0.35cm 0.4cm 0.35cm 0.35cm},clip,width=\linewidth]{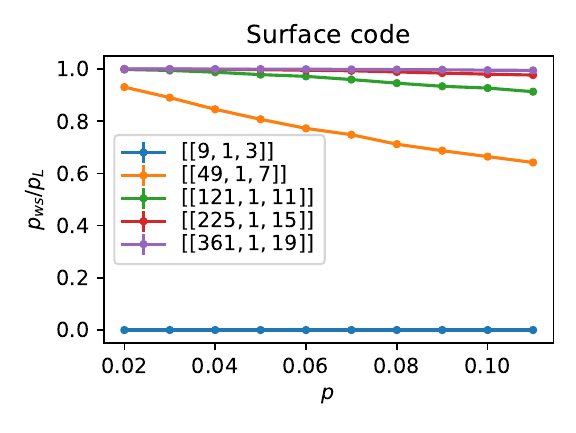}
		\caption{}
		\label{subfig:SC_wrongsyndromefraction}
	\end{subfigure}
	\begin{subfigure}{0.23\textwidth}
		\centering
		\includegraphics[trim={0.35cm 0.4cm 0.35cm 0.35cm},clip,width=\linewidth]{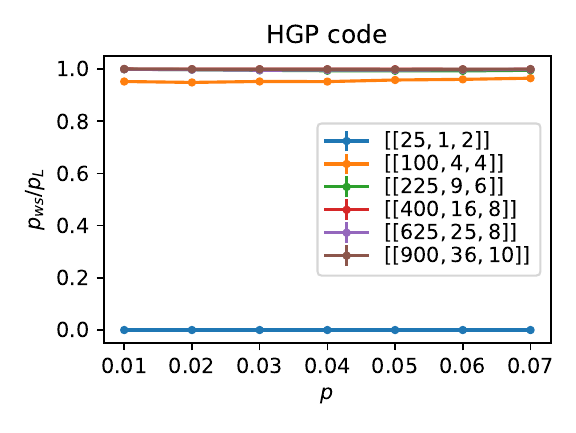}
		\caption{}
		\label{subfig:HGP_wrongsyndromefraction}
	\end{subfigure}
	\begin{subfigure}{0.23\textwidth}
		\centering
		\includegraphics[trim={0.35cm 0.4cm 0.35cm 0.35cm},clip,width=\linewidth]{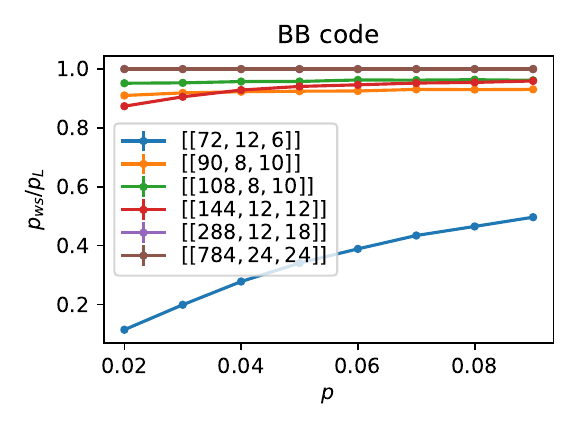}
		\caption{}
		\label{subfig:BB_wrongsyndromefraction}
	\end{subfigure}
	\caption{Fraction of LP failures that are due to a rounded solution not satisfying the syndrome for \subref{subfig:SC_decodingcomparison} the surface code, \subref{subfig:HGP_decodingcomparison} random HGP codes, and \subref{subfig:BB_decodingcomparison} BB codes.}
	\label{fig:wrongsyndromefraction}
\end{figure}

Finally, we note that our construction of uncorrectable constant-weight error patterns in Sec.~\ref{sec:uncorrectableLPexamples} helps explain why LP+OSD-CS performs well at small and intermediate code sizes but tends to underperform as we move to larger codes.
The LP solutions to these error patterns have $x_{i_0} = 0$ for a qubit $i_0$ that was actually corrupted.
As mentioned in Section~\ref{subsec:counterexamplesOSD}, when OSD-CS iterates over weight-one configurations of $\tilde e_{\mc T}$, one of the configurations will set qubit $i_0$ to 1. This leads to a successful recovery since $e \subseteq \mc S \cup \{i_0\}$.
However, as the code size increases, the decoder may encounter multiple such problematic patterns simultaneously. When this happens, it is no longer guaranteed that the weight-one or weight-two search performed by OSD-CS will be sufficient to set $\tilde e_{i_0} = 1$ for the corresponding $i_0$ from all the problematic patterns. Consequently, OSD-CS cannot simultaneously set all the relevant qubits to 1, which limits its effectiveness as the code size grows.

\section{Discussion}
\label{sec:discussion}
In this article, we studied LP decoding of quantum LDPC codes. On one hand, we showed that LP with independent rounding runs into problematic error patterns and does not admit a threshold for many families of quantum codes. On the other hand, we proposed the LP+OSD decoder, which incorporates an improved rounding technique and shows promising performance for HGP and BB codes of up to a few hundred qubits.

Based on our findings, many open directions remain to be explored. Theoretically, one could try to find modifications of the LP decoder that allow for a provable threshold in certain quantum codes. For example, adding redundant parity checks or considering multiple checks simultaneously could tighten the LP relaxation to remove certain fractional solutions~\cite{FeldmanCapacity,BurshteinGoldenberg10improvedLP}, although these methods will lead to increased LP problem sizes.
One could also try to find alternative LP relaxations that are aware of the stabilizer generators in a quantum code to obtain provable guarantees.

Practically, heuristic methods to improve the accuracy of LP decoding should also be explored. Taking inspiration from the classical setting, one could consider adaptive LP decoding, which incrementally adds constraints until an integral solution, representing the most likely error, is found~\cite{siegel2008adaptive}. Other possibilities include guessing facets of the LP polytope~\cite{DimakisWainwright06guessingfacets} and branch-and-bound methods~\cite{YangFeldmanWang06nonlinear}. It would be interesting to see what effect these improvements have on LP decoding of quantum codes when combined with OSD post-processing.

In terms of efficiency, LP+OSD has an asymptotic scaling that makes it infeasible for large codes. Several methods have already been developed to reduce the runtime of both LP (in the classical and quantum settings) and OSD (in the context of BP+OSD) without significant degradation in performance. 
The alternating direction method of multipliers (ADMM) reformulates LP decoding as a message-passing algorithm that has runtime comparable to BP but is guaranteed to converge~\cite{barman2013decomposition}, and it has been successful for decoding HGP codes~\cite{groues2022decoding}.
BP can be used as a preprocessor that either fully corrects the error or passes a prior to LP that can reduce its runtime~\cite{wang2009multi}.
Several techniques for reducing the cost of OSD have also been introduced in the context of BP post-pocessing, such as ordered Tanner forest (OTF)~\cite{iolius2024almost} and localized statistics decoding (LSD)~\cite{hillmann2024localized}, which may be adapted to LP post-pocessing as well.
Combining these approaches could reduce the runtime of LP+OSD to almost linear complexity, but one would have to evaluate if such techniques can maintain the accuracy of the decoder.

We only benchmarked the performance of LP+OSD in the code capacity setting as a proof-of-concept demonstration. 
For more accurate comparisons with existing decoders, it would be important to determine if, and for what code sizes, our decoder shows an advantage in the more realistic setting of circuit-level depolarizing noise (see Appendix~\ref{app:circuitlevelnoise}).
Circuit-level noise may be problematic for LP+OSD since it gives rise to a decoding problem with more error mechanisms, and LP+OSD seems to perform relatively worse at larger code sizes.
Nevertheless, one may benefit from applying LP decoding directly to depolarizing noise in such a scenario instead of approximating it as independent \(X\) and \(Z\) errors, as this approach can potentially capture error correlations more accurately.

In conclusion, our work enables further exploration of decoding quantum codes using LP, helping address the scarcity of accurate and efficient decoders for general quantum codes.

\acknowledgements
We thank A.~Kubica, A.~Leverrier, C.~Pattison, J.~Preskill, and Y.~Wu for helpful discussions and S.~Puri for providing feedback on an earlier version of this manuscript. S.G. acknowledges funding from the Air Force Office of Scientific Research (FA9550-19-1-0360), the U.S. Department of Energy (DE-SC0018407, DE-AC02-07CH11359), the National Science Foundation (PHY-1733907), and the National Science Foundation’s Quantum Leap Challenge Institute in Robust Quantum Simulation (OMA-2120757). M.S. is supported by AWS Quantum Postdoctoral Scholarship and
funding from the National Science Foundation PHY-2317110 and NSF Career Award CCF-2048204. The Institute for Quantum Information and Matter is an NSF Physics Frontiers Center.

\appendix

\section{An equivalent formulation of LP}\label{app:reformulationLP}
\begingroup
\def\thetheorem{\ref{lem:lpequivalnet}}
\begin{lemma}
	Let $e\in \bb F_2^n$ be a $Z$ error with syndrome $s=H_Xe$ and $e'$ be an error with the same syndrome. Then the LP~\eqref{eq:errorbasedLP}, defined with $e'$, is equivalent to the LP~\eqref{eq:LPRelaxationPrimal1}. More precisely, there is a bijection between feasible solutions given by the reflections
	\begin{align}
		x_i &= \begin{cases}x_i', \quad &i\notin e'\\1-x_i', \quad &i\in e'\end{cases}\quad \forall i\in Q\, ,\\
		w_{j,S} &= w_{j,S + U_j}'\quad \forall j\in C_X,\ S\in E_j^{s_j}\, ,
	\end{align}
	where $x_i$,$w_{j,S}$ are variables of the LP~\eqref{eq:LPRelaxationPrimal1}, $x_i'$,$w_{j,S}'$ are variables of the LP~\eqref{eq:errorbasedLP}, and $U_j = \res{e'}{f_j}$. Two solutions related by the bijection have the same objective values up to an additive constant $|e'|$. Furthermore, the LP decoders based on these solutions give the same output.
\end{lemma}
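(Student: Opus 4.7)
The plan is to verify that the stated map is an involution on the feasible-solution variables which preserves all constraints and changes the objective by the additive constant $|e'|$; once this is done, the bijection follows automatically (an involution is its own inverse), optimal solutions correspond, and the two decoding rules produce the same output. The proof breaks into three ingredients: well-definedness of the map on the index sets, constraint-by-constraint feasibility, and a short objective computation.

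First I would check that the map is well defined. Since $U_j = \res{e'}{f_j}$ has parity $|U_j| \equiv (H_X e')_j = s_j \pmod 2$, symmetric difference with $U_j$ is a bijection from $E_j^{s_j}$ onto $E_j^0$, so $w_{j,S} \mapsto w'_{j,S+U_j}$ permutes the index set correctly. The $x$-component is simply coordinate-wise reflection on $e'$. Both operations are involutions, so the composite map squares to the identity.

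Next I would verify feasibility of \eqref{eq:errorbasedLP} assuming that of \eqref{eq:LPRelaxationPrimal1} (the reverse direction is identical by involution). The first equality becomes $\sum_{T\in E_j^0} w'_{j,T} = 1$ after reindexing $T = S + U_j$. For the second equality I would split on whether $i \in e'$: if $i \notin e'$ then $i \notin U_j$, so $S \ni i \Leftrightarrow T \ni i$, and the sum transforms directly into $\sum_{T\in E_j^0,\, T\ni i} w'_{j,T} = x'_i = x_i$; if $i \in e'$ then $i \in U_j$, so $S \ni i \Leftrightarrow T \not\ni i$, and using the first constraint the sum equals $1 - x'_i$, which must equal $x_i$, matching the definition $x'_i = 1 - x_i$. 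Non-negativity of $w'_{j,S}$ is immediate, and $x'_i \ge 0$ when $i \in e'$ uses the bound $x_i \le 1$, which follows from $x_i = \sum_{S \ni i} w_{j,S} \le \sum_S w_{j,S} = 1$ for any $j \in N_X(i)$.

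For the objective, splitting $\sum_{i\in Q} x_i$ over $i \in e'$ and $i \notin e'$ and substituting the reflection yields $\sum_{i\notin e'} x'_i + \sum_{i\in e'}(1 - x'_i)$, which equals the objective of \eqref{eq:errorbasedLP} plus $|e'|$. Since the shift is independent of the feasible solution, the bijection sends the optimum $x^*$ of \eqref{eq:LPRelaxationPrimal1} to the optimum $x'^*$ of \eqref{eq:errorbasedLP}, and the decoder rule $\tilde e_i = x'^*_i$ for $i\notin e'$, $\tilde e_i = 1 - x'^*_i$ for $i\in e'$ is by construction the image of $x^*$ under the bijection, so both decoders return the same correction. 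The main obstacle is bookkeeping: keeping the $w$-reindexing aligned with the case split on $i \in e'$ in the second constraint, and invoking $|U_j| \equiv s_j \pmod 2$ at the right moments; after that, every step is a direct substitution.
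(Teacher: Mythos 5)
Your proposal is correct and follows essentially the same route as the paper's proof in Appendix~\ref{app:reformulationLP}: reindexing via $T = S + U_j$ using $|U_j|\equiv s_j \pmod 2$, splitting the second constraint on $i\in e'$ versus $i\notin e'$ (invoking the normalization constraint in the former case), and the same objective computation yielding the additive shift $|e'|$. Your observation that the map is an involution is a tidy way to dispense with the reverse direction, which the paper handles by simply noting it is "similar," but it does not constitute a different argument.
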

\addtocounter{theorem}{-1}
\endgroup

\begin{proof}
	Let $x',w'$ be a feasible solution to LP~\eqref{eq:errorbasedLP}, and define $x,w$ as in the lemma. Note that $|U_j| = s_j\mod 2$, so if $S\in E_j^{s_j}$, then $S+U_j\in E_j^0$ and $w_{j,S+U_j}'$ is indeed a variable of the LP~\eqref{eq:errorbasedLP}. We verify that $x,w$ is a feasible solution of the LP~\eqref{eq:LPRelaxationPrimal1}:
	\begin{equation}
		\sum_{S\in E_j^{s_j}}w_{j,S} = \sum_{S\in E_j^{s_j}}w_{j,S+U_j}' = \sum_{\tilde S\in E_j^0}w_{j,\tilde S}' = 1 \quad \forall j\in C_X\, ,
	\end{equation}
	where we made the substitution $\tilde S = S+U_j$. To check the next constraint, let $i\in Q$ and $j\in C_X$ with $\{i, j\}\in E$, and consider two cases. If $i\notin e'$, then for any $S\subseteq f_j$, we have $i\in S$ if and only if $i\in S+U_j$, so
	\begin{equation}
		\sum_{\substack{S\in E_j^{s_j} \\ S\ni i}}w_{j,S} = \sum_{\substack{S\in E_j^{s_j} \\ S\ni i}}w_{j,S+U_j}' = \sum_{\substack{\tilde S\in E_j^0\\\tilde S\ni i}} w_{j,\tilde S}' = x_i' = x_i\, ,
	\end{equation}
	where $\tilde S = S+U_j$ as before. If $i\in e'$, then for any $S\subseteq f_j$, we have $i\in S$ if and only if $i\notin S+U_j$. In this case,
	\begin{align}
		\sum_{\substack{S\in E_j^{s_j} \\ S\ni i}}w_{j,S} &= \sum_{\substack{S\in E_j^{s_j} \\ S\ni i}}w_{j,S+U_j}' = \sum_{\substack{\tilde S\in E_j^0\\\tilde S\not\ni i}} w_{j,S}' \nn\\
		&= \sum_{\tilde S\in E_j^0}w_{j,S}' - \sum_{\substack{\tilde S\in E_j^0\\\tilde S\ni i}} w_{j,S}' = 1 - x_i' = x_i\, ,
	\end{align}
	where we have also used the first constraint. The nonnegativity of $x_i, w_{j,S}$ follow from the fact that the constraints imply $0\le x_i', w_{j, S}'\le 1$. Showing that a feasible solution $x, w$ of the LP~\eqref{eq:LPRelaxationPrimal1} maps to a feasible solution $x', w'$ of the LP~\eqref{eq:errorbasedLP} is similar.
	
	The objective value of the LP~\eqref{eq:LPRelaxationPrimal1} with solution $x$ is
	\begin{equation}
		\sum_{i\in Q} x_i = \sum_{i\notin e} x_i' + \sum_{i\in e'}(1-x_i') = \sum_{i\notin e'} x_i' - \sum_{i\in e}x_i' + |e'|\, ,
	\end{equation}
	which is a constant $|e'|$ greater than that of LP~\eqref{eq:errorbasedLP} with solution $x'$.
	
	The decoder based on LP~\eqref{eq:errorbasedLP} outputs (a rounded version of) $\tilde e$ with $\tilde e_i = x'^*_i$ when $e'_i = 0$ and $\tilde e_i = 1 - x'^*_i$ when $e'_i = 1$. This is the same as the output $x^*$ of the decoder based on LP~\eqref{eq:LPRelaxationPrimal1}.
\end{proof}

\section{LP decoding of circuit-level noise}
\label{app:circuitlevelnoise}
In this appendix, we describe how LP+OSD can handle more general noise models such as phenomenological or circuit-level noise. Since code capacity or phenomenological noise can be mapped to circuit-level noise with errors only at specific locations, we will discuss decoding a stabilizer circuit with Pauli noise.

Given a circuit, we first need to identify the \emph{error mechanisms} affecting the circuit and define \emph{detectors}~\cite{Gidney2021stim}. An error mechanism $v_i$ is a Pauli error that occurs at a specific location in the circuit with probability $p_i$. For example, $v_i$ could be a two-qubit Pauli error after a two-qubit operation, or it could be a measurement error which is a bit-flip error on the classical outcome. We assume that error mechanisms occur independently, so the total distribution of errors in the circuit is a product of all the error mechanisms. In this framework, an $r$-qubit depolarizing channel should be decomposed as $4^r-1$ independent error mechanisms corresponding to the nontrivial $r$-qubit Pauli operators. A detector $D_j$ is a product of measurement outcomes in the circuit that is deterministic in the absence of errors but reveal information about possible errors when flipped. For syndrome extraction circuits of stabilizer codes, the detectors are often chosen to be the product of consecutive stabilizer measurements.

We then define the \emph{detector check matrix}~\cite{Higgott2025sparseblossom} $H$ to be the binary matrix with rows corresponding to the different detectors and columns corresponding to the different error mechanisms such that $H_{ji}=1$ exactly when $v_i$ occurring causes the value of $D_j$ to flip.

Finding the most likely error that has affected the circuit is equivalent to decoding the classical code with parity-check matrix $H$. Thus, we can use the LP+OSD decoder. One modification that may be needed is in the case when the error mechanisms do not all occur with the same probability, as we had assumed in the main text. In this case, we would change the coefficients of the objective function in the LP~\eqref{eq:LPRelaxationPrimal1} so that we minimize $\sum_{i}w_ix_i$ instead of $\sum_{i}x_i$, where $w_i = \log{(1-p_i)/p_i}$ is the log probability ratios of the error mechanism $v_i$. The OSD post-processing step is unchanged.

Note that as with stabilizers in the code capacity setting, there are combinations of error mechanisms that are harmless for circuit-level noise. For example, a qubit may suffer two consecutive $Z$ errors, with all measurements in between that would be flipped also suffering a measurement error. However, since we are only interested finding the most likely error and our decoder does not make use of the degeneracy, these harmless configurations of errors do not need to be explicitly calculated.

\section{Simulation details}
\label{app:simulationdetails}
We give more details on the simulations in Sec.~\ref{sec:numerics}.
We applied Monte Carlo sampling to study the performance the various decoders on the rotated surface code, random HGP codes, and BB codes.
The random HGP codes are sampled as follows. We create a uniformly random connected $(3,4)$-biregular bipartite graph $G$ with $|A|=4s$ left vertices and $|B|=3s$ right vertices for a parameter $s$. Consider $G$ as the Tanner graph of classical codes $C$ with parity-check matrix $H\in \bb F_2^{B\times A}$ and $C^\top$ with parity-check matrix $H^\top\in \bb F_2^{A\times B}$. We postselect on both $C$ and $C^\top$ having distance at least $d(s) = 2, 4, 6, 8, 8, 10$ for $s = 1, 2, 3, 4, 5, 6$, respectively (where the distance of the trivial code $\{0\}$ is defined to be infinity). The HGP code $\mc C$ is the hypergraph product of $C$ with itself, which has parameters $[[n=25s^2, k\ge s^2, d\ge d(s)]]$.

To estimate the logical error rate $p_L$, we apply $Z$ errors independently to each qubit of the code with probability $p$, giving an error $e$. The syndrome $s=H_Xe$ is passed to the decoder, which outputs a correction $\tilde e$. By repeating multiple times, we estimate the logical error rate $p_L$ as the fraction of times $e + \tilde e$ is not a $Z$ stabilizer. For Fig.~\ref{fig:wrongsyndromefraction}, we instead determine the fraction of times $H_X\tilde e\ne s$, considering only the trials where LP with independent rounding fails.

For the surface code and BB codes, we took between 50,000 and 10 million samples for each value of $p$ to obtain reasonable error bars. For the random HGP codes, each data point was the result of randomly sampling between 1000 and 60,000 codes, decoding 10 different errors for each code, and taking the total fraction over all codes and decoding trials. The error bars for this procedure were obtained via bootstrapping.

The decoders considered were LP or BP with independent rounding, OSD-0 post-pocessing, and OSD-CS post-pocessing. For these decoders, we used the same parameters as in Ref.~\cite{Roffe2020BPOSD}: min-sum BP with variable scaling factor $\alpha = 1 - 2^{-t}$, where $t$ is the iteration number, and $\lambda=60$ in OSD-CS. The BP-based decoders were implemented using the LDPC package~\cite{Roffe_LDPC_Python_tools_2022}. The LPs were solved using Gurobi~\cite{gurobi}. See Ref.~\cite{zenodocode} for the source code for the simulations and the data used to generate the plots.

\bibliographystyle{unsrtnat}
\bibliography{mehdis.bib}

\end{document}